\newcommand*{\tran}{^{\mkern-1.5mu\mathsf{T}}} 
\pgfplotsset{compat=newest}
\newtheorem{lem}{Lemma}
\newtheorem{thm}{Theorem}
\begin{document}

\definecolor{UniBlue}{cmyk}{70,0,0,0}%
\definecolor{grayCar}{RGB}{230, 230, 230}%
\definecolor{colorPlot}{RGB}{44,123,182} 
\definecolor{colorPlot2}{RGB}{253,174,97} 
\def\linewidthPlots{0.4mm}

\tikzset{
	pics/KinCar/.style n args={8}{
		code = { %
        \pgfmathsetmacro\ChassisL{1.5*#4} 
        \pgfmathsetmacro\shiftChassisX{-0.175*\ChassisL} 
        \pgfmathsetmacro\shiftChassisY{-0.5*#5} 
        \pgfmathsetmacro\lengthWheel{0.2*\ChassisL} 
        \pgfmathsetmacro\heightWheel{0.075*#5} 
        \pgfmathsetmacro\wheelxR{-0.5*\lengthWheel} 
        \pgfmathsetmacro\wheelLy{0.2*#5+0.5*\heightWheel} 
        \pgfmathsetmacro\wheelRy{-0.2*#5-0.15*#5} 
        \pgfmathsetmacro\FrontAxisX{#4} 
        \pgfmathsetmacro\wheelxF{\FrontAxisX-0.5*\lengthWheel} 
        \pgfmathsetmacro\wheelLyC{\wheelLy+0.5*\heightWheel} 
        \pgfmathsetmacro\wheelRyC{\wheelRy+0.5*\heightWheel} 
        \pgfmathsetmacro\lengthO{0.5*\ChassisL} 
        \pgfmathsetmacro\linewidthOrient{2*\linewidthPlots} 

        \begin{scope}[shift={(axis direction cs:#1,#2)}, rotate around={#3:(axis cs:0,0)}] 
                \begin{scope}[transparency group,opacity= #8] 
                \draw[draw = black, #7] (axis cs:\shiftChassisX,\shiftChassisY) rectangle ++(axis direction cs:\ChassisL,#5); 
                \draw[very thin] (0,\wheelRy) -- (0,\wheelLy); 
                \draw[very thin] (0,0) -- (axis cs:\FrontAxisX,0); 
                \draw[very thin] (\FrontAxisX,\wheelRy) -- (\FrontAxisX,\wheelLy); 
                \draw[draw = black, fill = black] (axis cs:\wheelxR,\wheelLy) rectangle ++(axis direction cs:\lengthWheel,\heightWheel); 
                \draw[draw = black, fill = black] (axis cs:\wheelxR,\wheelRy) rectangle ++(axis direction cs:\lengthWheel,\heightWheel); 
                \draw[draw = black, fill = black, rotate around={#6:(\FrontAxisX,\wheelLyC)}] (axis cs:\wheelxF,\wheelLy) rectangle ++(axis direction cs:\lengthWheel,\heightWheel); 
                \draw[draw = black, fill = black, rotate around={#6:(\FrontAxisX,\wheelRyC)}] (axis cs:\wheelxF,\wheelRy) rectangle ++(axis direction cs:\lengthWheel,\heightWheel); 
                \draw[line width=\linewidthOrient, -{Triangle[angle=60:2mm]}, colorPlot] (0,0) -- ++(axis direction cs:\lengthO,0); 
                \node [black] at (axis cs:0,0) {\scalebox{.5}{\textbullet}}; 
                \end{scope}
        \end{scope}
	
		}
	}
}

\tikzset{
    pics/Trailer/.style n args={9}{
		code = { %
        \pgfmathsetmacro\shiftChassisX{-0.2*#6} 
        \pgfmathsetmacro\shiftChassisY{-0.5*#7} 
        \pgfmathsetmacro\lengthWheel{0.2*#6} 
        \pgfmathsetmacro\heightWheel{0.075*#7} 
        \pgfmathsetmacro\wheelxR{-0.5*\lengthWheel} 
        \pgfmathsetmacro\wheelLy{0.2*#7+0.5*\heightWheel} 
        \pgfmathsetmacro\wheelRy{-0.2*#7-0.15*#7} 
        \pgfmathsetmacro\wheelFrontX{0.5*#6} 

        \pgfmathsetmacro\distTr{#8} 
        \pgfmathsetmacro\lengthTr{0.5*#6} 
        \pgfmathsetmacro\wheelFrontXT{0.35*\lengthTr} 
        \pgfmathsetmacro\OrientationTA{-#3+#4} 
        \pgfmathsetmacro\OrientationTB{-#4+#5}

        \begin{scope}[shift={(axis direction cs:#1,#2)}, rotate around={#3:(axis cs:0,0)}] 
            \begin{scope}[] 
                \begin{scope}[transparency group,opacity= #9] 
                \draw[draw = black, fill=grayCar] (axis cs:\shiftChassisX,\shiftChassisY) rectangle ++(axis direction cs:#6,#7); 
                \draw[very thin] (0,\wheelRy) -- (0,\wheelLy); 
                \draw[draw = black, fill = black] (axis cs:\wheelxR,\wheelLy) rectangle ++(axis direction cs:\lengthWheel,\heightWheel); 
                \draw[draw = black, fill = black] (axis cs:\wheelxR,\wheelRy) rectangle ++(axis direction cs:\lengthWheel,\heightWheel); 
                \node [black] at (axis cs:\wheelFrontX,0) {\scalebox{.2}{\textbullet}}; 
                \node [black] at (axis cs:0,0) {\scalebox{.3}{\textbullet}}; 
                \begin{scope}[rotate around={\OrientationTA:(axis cs:0,0)}, shift = {(axis direction cs:-\distTr,0)}]
                    \draw[draw = black, fill=grayCar] (axis cs:\shiftChassisX,\shiftChassisY) rectangle ++(axis direction cs:\lengthTr,#7); 
                    \draw[very thin] (0,\wheelRy) -- (0,\wheelLy); 
                    \draw[draw = black, fill = black] (axis cs:\wheelxR,\wheelLy) rectangle ++(axis direction cs:\lengthWheel,\heightWheel); 
                    \draw[draw = black, fill = black] (axis cs:\wheelxR,\wheelRy) rectangle ++(axis direction cs:\lengthWheel,\heightWheel); 
                    \node [black] at (axis cs:0,0) {\scalebox{.3}{\textbullet}}; 
                    \draw[very thin] (axis cs:0,0) -- (\distTr,0); 
                    \begin{scope}[rotate around={\OrientationTB:(axis cs:0,0)}, shift = {(axis direction cs:-\distTr,0)}]
                        \draw[draw = black, fill=grayCar] (axis cs:\shiftChassisX,\shiftChassisY) rectangle ++(axis direction cs:\lengthTr,#7); 
                        \draw[very thin] (0,\wheelRy) -- (0,\wheelLy); 
                        \draw[draw = black, fill = black] (axis cs:\wheelxR,\wheelLy) rectangle ++(axis direction cs:\lengthWheel,\heightWheel); 
                        \draw[draw = black, fill = black] (axis cs:\wheelxR,\wheelRy) rectangle ++(axis direction cs:\lengthWheel,\heightWheel); 
                        \node [black] at (axis cs:0,0) {\scalebox{.3}{\textbullet}}; 
                        \draw[very thin] (axis cs:0,0) -- (\distTr,0); 
                    \end{scope}
                \end{scope}
                \end{scope}
            \end{scope}
        \end{scope}

		}
	}
}

\title{Model Predictive Control of Non-Holonomic Vehicles:\\ Beyond Differential-Drive} 
\author{Mario Rosenfelder, Henrik Ebel\thanks{The corresponding author is H.~Ebel. \newline \textit{Email addresses:}\newline
\texttt{mario.rosenfelder@itm.uni-stuttgart.de} (Mario
Rosenfelder), \texttt{henrik.ebel@itm.uni-stuttgart.de}
(Henrik Ebel), \texttt{st160910@stud.uni-stuttgart.de} (Jasmin
Krauspenhaar), \texttt{peter.eberhard@itm.uni-stuttgart.de}
(Peter Eberhard)} , Jasmin Krauspenhaar, and Peter Eberhard \vspace{2ex}\\
    Institute of Engineering and Computational Mechanics, \\
    University of Stuttgart, Pfaffenwaldring 9, 70569 Stuttgart, Germany
}
\date{}
\maketitle
\begin{abstract}
Non-holonomic vehicles are of immense practical value and increasingly subject to automation. 
However, controlling them accurately, e.g., when parking, is known to be challenging for automatic control methods, including model predictive control~(MPC). 
Combining results from MPC theory and sub-Riemannian geometry in the form of homogeneous nilpotent system approximations, this paper proposes a comprehensive, ready-to-apply design procedure for MPC controllers to steer controllable, driftless non-holonomic vehicles into given setpoints. 
It can be ascertained that the resulting controllers nominally asymptotically stabilize the setpoint for a large-enough prediction horizon. 
The design procedure is exemplarily applied to four vehicles, including the kinematic car and a differentially driven mobile robot with up to two trailers.  
The controllers use a non-quadratic cost function tailored to the non-holonomic kinematics. 
Novelly, for the considered example vehicles, it is proven that a quadratic cost employed in an otherwise similar controller is insufficient to reliably asymptotically stabilize the closed loop. 
Since quadratic costs are the conventional choice in control, this highlights the relevance of the findings. 
To the knowledge of the authors, it is the first time that MPC controllers of the proposed structure are applied to non-holonomic vehicles beyond very simple ones, in particular (partly) on hardware. 
\end{abstract}

\section{Introduction}\label{sec:Intro}
Non-holonomic vehicles are omnipresent in technical applications. 
Cars, trucks, differentially driven service robots, and vehicles with trailers are nowadays indispensable for personal mobility, logistics, and automation. 
There are even applications of vehicles with multiple trailers, such as baggage tugs at airports. 
However, automatic control of such systems can be quite difficult, even compared to systems that look much more intricate to the uninitiated spectator. 
A key reason for that is that arguments made on the basis of a linearization of such systems do not lead very far since, even locally, controllability is lost in the process of linearization. 
While also presenting a challenge to humans wanting to control or park such vehicles, it challenges control methods such as model predictive control~(MPC). 
Often enough, in the latter's case, proofs of nominal stability do rely on some, albeit local, arguments or auxiliary controllers retrieved from controllable system linearizations, for instance in the case of the most well-known variants of model predictive control relying on a terminal set, e.g.,~\cite{ChenAllgoewer98}. 
As another symptom of the difficulty,  as proven in~\cite{MuellerWorthmann17}, even the arguably simplest non-holonomic vehicle, the differentially driven mobile robot, cannot be reliably stabilized asymptotically by means of an unconstrained nonlinear model predictive controller with quadratic cost, irrespective of the prediction horizon's length. 
This is crucial since, in applications, MPC controllers almost universally rely on a quadratic cost, penalizing in particular the squared and weighted Euclidean distance of the state or output from the corresponding desired value. 
But if this conventional choice does not work reliably, it is suddenly very unclear how to design a working MPC controller since the application engineer cannot rely on the usual paradigm of using a quadratic cost and enlarging the prediction horizon until a working controller is obtained. 
Still, despite the theoretical difficulties raised by deriving an MPC controller for these kinds of systems, the effort is worth it. 
This is since MPC controllers are rather easy to apply and tune. 
If the theoretical struggles of nonlinear MPC with non-holonomic vehicles are solved, one is left with a rather elegant solution for precise maneuvering of such vehicles, e.g., to park them accurately by merely repeatedly solving an optimal control problem~(OCP) with a suitable optimization algorithm. 
In comparison, other control techniques for the setpoint control of non-holonomic vehicles are either very heuristic and hence hard to maintain and tune or much more specific to a single type of vehicle and very arduous to derive, e.g., control using sinusoids~\cite{TeelMurrayWalsh95}. 
Also, other techniques rarely allow for a direct, non-conservative consideration of input constraints, which MPC does. 
Moreover, while some of the theoretical background can be arduous, the resulting OCPs are rather easy to understand and implement. 
Their structure is comparatively simple since, e.g., no additional constraints are introduced for nominal stabilization purposes, making easier implementation and numerical solution. 
In addition, since this paper demonstrates the proposed control design procedure in multiple examples, application-oriented readers may apply the procedure to other systems without studying every detail of the underlying theory, which is mostly necessary to connect the findings to theory that can guarantee a functioning controller and to obtain insightful mechanical interpretations. 

This paper is by no means the first paper aiming to overcome the difficulties of the nonlinear model predictive control of non-holonomic vehicles. 
Importantly, in~\cite{WorthmannEtAl16,WorthmannEtAl15}, the authors propose a non-quadratic cost function that achieves the stabilization task for the differentially driven mobile robot, proving that asymptotic stability is achieved for a certain minimum prediction horizon, without any additional design ingredients such as a terminal constraint or cost. 
Therein, proof and derivation are very technical, without much system theoretic or mechanical interpretation given, leaving little hope to directly generalize the approach to more intricate non-holonomic vehicles. 
However, in~\cite{RosenfelderEbelEberhard21}, we notice that the setup of the cost function proposed in~\cite{WorthmannEtAl16,WorthmannEtAl15} and our generalization to arbitrary, non-zero setpoints suggest a mechanical interpretation based on Lie brackets and directions of non-holonomic constraints. 
The key observation made in~\cite{RosenfelderEbelEberhard21} based on~\cite{WorthmannEtAl16,WorthmannEtAl15} is that different exponents are employed for the cost terms, so that, close to the setpoint, the cost function penalizes errors more in those directions that are harder to control, i.e., directions not covered directly by the input vector fields but only by vectors obtained from their Lie brackets. 
Intuitively, close to the setpoint, this allows the MPC controller to perform more extensive motions into easily controllable directions to compensate smaller deviations in directions that are not directly actuated. 
A human driver does the same when parking a non-holonomic vehicle. 
This intuitive human behavior when parking in parallel does not correspond to a cost-minimal behavior in the Euclidean sense; after all, that one would correspond to a motion made impossible by the non-holonomic constraint. 
Extensive numerical convergence studies in our previous work show that this interpretation enables us to come up with what seems like an asymptotically stabilizing non-quadratic cost function also for a system more complicated than the differentially driven robot, namely the kinematic car~\cite{RosenfelderEbelEberhard21}. 
However, our previous work presents mostly conjectures and empirical evidence and, thus, leaves many open questions regarding the formalization of the relationship between mechanical interpretation and control theory. 
The purpose of this paper is to close this gap. 

In this context, the paper makes the following contributions. 
First, to underline the necessity and relevance of the proposed MPC design procedure, we establish a sufficient condition for when quadratic costs are insufficient to asymptotically stabilize underactuated, input-affine driftless systems with a continuous-time nonlinear model predictive controller without terminal constraints or cost. 
This condition is fulfilled for all application examples studied in this paper, meaning that non-quadratic costs are actually necessary to reliably asymptotically stabilize those systems locally. 
In its generality, to the knowledge of the authors, the proposed condition is novel. 
Subsequently, we propose a novel design procedure for nonlinear model predictive controllers for general non-holonomic vehicles modeled kinematically as a driftless system. 
The procedure provides non-quadratic cost functions of a very specific setup.  
Controllers obtained that way can be proven to locally asymptotically stabilize the system if the controller's prediction horizon is long enough. 
Thirdly, we apply and delineate the proposed design procedure for four exemplary non-holonomic vehicles, including differentially driven robots with one and even with two trailers.  
In particular, these examples shall empower readers to repeat the procedure for other systems since some of the definitions employed seem rather arduous on paper but are easier to grasp when illustrated through applications. 
To the knowledge of the authors, it is the very first time that MPC controllers with the proposed kind of cost functions are applied to the more intricate kinds of non-holonomic vehicles considered. 
Moreover, we do not stop at theoretic deliberations or simulation results but also verify some results in the form of hardware experiments, in particular involving a differentially driven mobile robot with a trailer.  

Key theoretical foundations employed in this paper are the stability proofs in~\cite{CoronGrueneWorthmann20} for driftless non-holonomic systems for which a homogeneous approximation is known. 
We combine this with theory from~\cite{Jean14}, which uses results from sub-Riemannian geometry~\cite{Bellaiche96} to produce a procedure to obtain a specific kind of homogeneous system  approximation. 
Readers unfamiliar with homogeneous system approximations can see them as a tool as fundamental to non-holonomic systems as linearization to holonomic ones. 
The motivation for using homogeneous approximations and not a simpler, linear approximation is that controllability is lost in the process of linearization, making linearization useless for this paper's purposes. 

The paper is organized as follows. 
Section~\ref{sec:quadratic_cost} derives a condition under which quadratic costs are insufficient for the given setup. 
Section~\ref{sec:homog_approx} delineates how homogeneous approximations of the considered system class can be obtained in a constructive manner since these are useful to formulate OCPs leading to functioning MPC controllers for non-holonomic vehicles. 
It is then the topic of Section~\ref{sec:MPC_using_hom_approx} to show how these OCPs and the corresponding MPC controllers are generally furnished. 
These findings are applied in Section~\ref{sec:MPC_for_nonhol_vehicles} to obtain MPC controllers for various non-holonomic vehicles, including the differentially driven mobile robot (for arbitrary setpoints), the kinematic car, and the differentially driven robot with one and two trailers.  
To show the performance of the obtained controllers, in cases where hardware is available to us, we mainly look at hardware results, whereas simulations are considered where this is not the case. 

\section{Quadratic costs and driftless, input-affine non-holonomic vehicles}\label{sec:quadratic_cost}
In this paper, we look at controllable systems of the form $\dot{\bm{x}}=\bm{G}(\bm{x})\bm{u}$ with the control input~$\bm{u}\in\mathbb{R}^{n_u}$ and the system state~$\bm{x}\in\mathbb{R}^{n_x}$ with~$n_u<n_x$ so that the system is underactuated. 
Kinematic models of non-holonomic vehicles such as the differentially driven mobile robot, potentially with trailer(s), and the kinematic car are all of this form. 
The goal of this section is to show when quadratic costs do not work properly for the considered systems, serving as motivation for the remaining paper. 
To that end, in the following, we follow the argumentation from~\cite{MuellerWorthmann17} but instead of only looking at the differentially driven mobile robot, we extend the arguments to the more general above-mentioned class of systems. 
Without loss of generality, in this section, the origin~$\bm{0}$ shall be stabilized asymptotically. 
However, before deriving a condition for the insufficiency of quadratic costs, the introduction of some notation and a clear definition of the considered MPC problem are necessary. 

We employ the usual MPC receding horizon strategy, meaning that, at each time instant~$t\coloneqq k\,\delta_t$, for a given sampling time~$\delta_t> 0$, the OCP
\begin{subequations}
    \begin{alignat}{3}%
            &\hspace{0pt} \underset{\bm{u}(\cdot\,\vert\, t)}{\text{minimize}}
            &&\hspace{6pt}\!\int_{t}^{t+T}\!\ell(\bm{x}(\tau\,\vert\, t),\bm{u}(\tau\,\vert\, t))\; \textnormal{d}\tau \label{eq:mpc_cost_generic}\\
            & \text{subject to}
            &&\hspace{6pt}\dot{\bm{x}}(\tau\,\vert\, t)=\bm{G}(\bm{x}(\tau\,\vert\, t))\bm{u}(\tau\,\vert\, t),\label{eq:mpc_dynamics_generic}\\
            &&&\hspace{6pt} \bm{u}(\tau \,\vert\, t)\in\mathcal{U}\quad \forall \tau \in [t,t+T),\\
            &&&\hspace{6pt}\bm{x}(t\,\vert\, t)=\bm{x}(t)
    \end{alignat}%
    \label{eq:mpc_optprob_generic}%
\end{subequations}%
is solved with the stage cost~$\ell(\bm{x},\bm{u})$. 
Therein, we use the usual MPC notation in which~$\bm{u}(\tau\,\vert\, t)$ is the control input trajectory planned at time~$t$ and evaluated at times~$\tau\in [t,t+T)$ for a given closed input constraint set~$\mathcal{U}\subset\mathbb{R}^{n_u}$. 
Using the planned input trajectory, the predicted state trajectory~$\bm{x}(\cdot\,\vert\, t)$ results from integrating the system dynamics~\eqref{eq:mpc_dynamics_generic} over the prediction horizon. 
Having solved the optimization problem, yielding the optimal solution~$\bm{u}^\star(\cdot \,\vert\, t)$, the optimal control input is applied on the time interval~$[t,t+\delta_t)$, i.e., $\bm{u}({\tau})\coloneqq \bm{u}^\star(\tau \,\vert\, t)$ for $\tau\in[t,t+\delta_t)$, and the OCP is solved anew at time~$t+\delta_t$. 

In this section, it is assumed that the cost function~\eqref{eq:mpc_cost_generic} is quadratic, and, thus, it is assumed that the stage cost takes the form~$\ell(\bm{x},\bm{u})\coloneqq\bm{x}\tran\bm{Q}\bm{x}+\bm{u}\tran\bm{R}\bm{u}$ with symmetric, positive definite weighting matrices~$\bm{Q}, \bm{R}$, i.e., $\bm{Q},\bm{R}\succ \bm{0}$. 
This choice of cost is insufficient if, irrespective of the length of the optimization horizon~$T\in\mathbb{R}_{> 0}$, there always exists an initial condition~$\bm{x}(0)\eqqcolon \bm{x}_0 \neq \bm{0}$ such that~$\bm{u}^{\star}(\cdot\,\vert\, 0)\equiv \bm{0}$ is the unique optimal solution of the MPC optimization problem. 
In fact, the subsequently derived condition requires that such an initial condition exists in each set~$\mathcal{B}_{\varepsilon}(\bm{0})\setminus \{\bm{0}\}$, with~$\mathcal{B}_{\varepsilon}(\bm{p})\coloneqq\left\lbrace \bm{x}\in\mathbb{R}^{n_x}\ \vert\ \left\Vert \bm{x}-\bm{p} \right\Vert_2 \leq \varepsilon \right\rbrace$, irrespective of the value of~$\varepsilon>0$. 
For the given class of dynamical systems, this means that the application of the first part of the optimal input sequence lets the system stay stationary in~$\bm{x}_0\neq \bm{0}$, as do subsequent solutions of the MPC problem following the receding horizon strategy. 
Therefore, the system state will remain in~$\bm{x}_0\neq\bm{0}$ for eternity and, in particular, the controller does not asymptotically stabilize~$\bm{0}$. 
In the first step, the following lemma introduces a condition under which the zero input fulfills the necessary optimality conditions. 
\begin{lem}
    \label{lem:nec_opt_cond}
    Assume that the prediction horizon~$T\in (0,\infty)$ and weighting matrices~$\bm{Q}=\bm{Q}\tran \succ \bm{0}$,~$\bm{R}=\bm{R}\tran \succ \bm{0}$ are given. 
    Then, if the initial condition~$\bm{x}_0\in\mathbb{R}^{n_x}\setminus \lbrace \bm{0} \rbrace$ satisfies
    \begin{equation}
        \bm{x}_{0}\tran\bm{Q}\bm{G}(\bm{x}_0)=\bm{0},
        \label{eq:lemma_1}
    \end{equation}
    the control input trajectory~$\bm{u}^{\star}(\cdot\,\vert\, 0)\equiv \bm{0}$, which is constantly zero over the prediction horizon, satisfies the necessary optimality conditions of the MPC optimization problem~\eqref{eq:mpc_optprob_generic}.
\end{lem}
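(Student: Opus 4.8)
The plan is to invoke the Pontryagin Maximum Principle (PMP) for the finite-horizon problem~\eqref{eq:mpc_optprob_generic} at $t=0$ and to verify that the candidate $\bm{u}^\star(\cdot\,\vert\,0)\equiv\bm{0}$ meets all of its first-order necessary conditions. First I would form the Hamiltonian $H(\bm{x},\bm{u},\bm{\lambda})\coloneqq \bm{x}\tran\bm{Q}\bm{x}+\bm{u}\tran\bm{R}\bm{u}+\bm{\lambda}\tran\bm{G}(\bm{x})\bm{u}$, so that the necessary conditions read as the state equation $\dot{\bm{x}}=\bm{G}(\bm{x})\bm{u}$, the costate equation $\dot{\bm{\lambda}}=-\nabla_{\bm{x}}H$, the pointwise minimum condition $\bm{u}^\star(\tau)\in\arg\min_{\bm{v}\in\mathcal{U}}H(\bm{x}^\star(\tau),\bm{v},\bm{\lambda}(\tau))$, and---since neither a terminal cost nor a terminal constraint is present and the terminal state is free---the transversality condition $\bm{\lambda}(T)=\bm{0}$.

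Next I would substitute the candidate. Because $\bm{G}(\bm{x})\bm{0}=\bm{0}$, the zero input keeps the state frozen, so $\bm{x}^\star(\tau)\equiv\bm{x}_0$ on $[0,T]$. Evaluating $\nabla_{\bm{x}}H$ along $\bm{u}=\bm{0}$ annihilates the bilinear term (its $\bm{x}$-derivative carries the factor $\bm{u}=\bm{0}$), leaving the constant right-hand side $\dot{\bm{\lambda}}=-2\bm{Q}\bm{x}_0$. Integrating backward from the transversality condition then yields the closed form $\bm{\lambda}(\tau)=2(T-\tau)\bm{Q}\bm{x}_0$, which I would feed into the minimum condition.

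The crux is the pointwise Hamiltonian minimization. I would compute $\nabla_{\bm{v}}H(\bm{x}_0,\bm{v},\bm{\lambda}(\tau))=2\bm{R}\bm{v}+\bm{G}(\bm{x}_0)\tran\bm{\lambda}(\tau)$, so that at $\bm{v}=\bm{0}$ it equals $2(T-\tau)\,\bm{G}(\bm{x}_0)\tran\bm{Q}\bm{x}_0$. This is exactly where the hypothesis enters: using $\bm{Q}=\bm{Q}\tran$, condition~\eqref{eq:lemma_1} is equivalent to $\bm{G}(\bm{x}_0)\tran\bm{Q}\bm{x}_0=\bm{0}$, so the gradient vanishes for every $\tau\in[0,T]$. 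Since $\bm{R}\succ\bm{0}$ makes $H$ strictly convex in $\bm{v}$, a vanishing gradient certifies $\bm{v}=\bm{0}$ as the unconstrained global minimizer; as the zero input is admissible, i.e.\ $\bm{0}\in\mathcal{U}$, it also minimizes over $\mathcal{U}$, which closes the argument.

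I expect the main obstacle to be bookkeeping rather than concept: one must check carefully that the $\bm{x}$-derivative of $\bm{\lambda}\tran\bm{G}(\bm{x})\bm{u}$ genuinely drops out along the candidate, and that the free-endpoint transversality condition is the correct boundary condition for this OCP. It is also worth stating explicitly that only the \emph{necessary} conditions are being verified---strict convexity in $\bm{u}$ does not render the full OCP convex, owing to the state-dependent $\bm{G}(\bm{x})$---which is precisely why the lemma is phrased in terms of necessary optimality conditions.
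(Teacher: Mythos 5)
Your proposal is correct and follows essentially the same route as the paper: both verify Pontryagin's conditions by freezing the state at $\bm{x}_0$ under the zero input, solving the costate equation backward from the transversality condition $\bm{\lambda}(T)=\bm{0}$ to get $\bm{\lambda}\propto(T-\tau)\bm{Q}\bm{x}_0$, and then using Condition~\eqref{eq:lemma_1} together with $\bm{R}\succ\bm{0}$ to certify pointwise optimality of $\bm{u}=\bm{0}$ in the Hamiltonian. The only cosmetic differences are your minimization sign convention (the paper maximizes $\mathcal{H}=-\bm{x}\tran\bm{Q}\bm{x}-\bm{u}\tran\bm{R}\bm{u}+\bm{\lambda}\tran\bm{G}(\bm{x})\bm{u}$, hence its costate $2(t-T)\bm{Q}\bm{x}_0$) and that you certify the pointwise minimum via vanishing gradient plus strict convexity, where the paper checks the inequality $\bm{u}\tran\bm{R}\bm{u}\geq 0$ directly.
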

\begin{proof}
    The goal is to show the three necessary optimality conditions from Pontryagin's maximum principle~\cite[Section~2.2.3]{Sethi19}, similar to~\cite{MuellerWorthmann17}. 
    For notational brevity, in this proof, we write~$\bm{u}^{\star}(\cdot)$ instead of~$\bm{u}^{\star}(\cdot\,\vert\, 0)$ and~$\bm{x}^{\star}(\cdot)$ instead of the corresponding predicted state trajectory~$\bm{x}^{\star}(\cdot\,\vert\, 0)$. 
    Furthermore, time dependencies are omitted notation-wise if not leading to ambiguities. 
    To this end, the (control) Hamiltonian is defined as~$\mathcal{H}(\bm{x},\bm{u},\bm{\lambda})\coloneqq -\bm{x}\tran\bm{Q}\bm{x}-\bm{u}\tran\bm{R}\bm{u}+\bm{\lambda}\tran\bm{G}(\bm{x})\bm{u}$ with the adjoint variable~$\bm{\lambda}\in\mathbb{R}^{n_x}$. 
    The first optimality condition reads $\dot{\bm{x}}^{\star}(t)\overset{\scriptscriptstyle !}{=}\frac{\partial}{\partial\bm{\lambda}}\mathcal{H}\tran (\bm{x}^{\star},\bm{u}^{\star},\bm{\lambda}^{\star})$. 
    Since~$\frac{\partial}{\partial\bm{\lambda}}\mathcal{H}\tran \vert_{(\bm{x}^{\star},\bm{u}^{\star},\bm{\lambda}^{\star})}=\bm{G}(\bm{x}^{\star})\bm{u}^{\star}$, the condition is trivially satisfied due to the predicted state and input trajectories jointly satisfying the system dynamics by design of the MPC problem. 
    The second condition reads~$\dot{\bm{\lambda}}^{\star}\overset{\scriptscriptstyle !}{=}-\frac{\partial}{\partial \bm{x}}\mathcal{H}\tran\vert_{(\bm{x}^{\star},\bm{u}^{\star},\bm{\lambda}^{\star})}$ with the right-hand side yielding~$-\frac{\partial}{\partial \bm{x}}\mathcal{H}\tran\vert_{(\bm{x}^{\star},\bm{u}^{\star},\bm{\lambda}^{\star})}=2\,\bm{Q}\bm{x}^{\star}-{\bm{\lambda}^{\star}}\tran\frac{\partial\bm{G}(\bm{x})}{\partial \bm{x}}\big\vert_{\bm{x}^{\star}}\bm{u}^{\star}=2\,\bm{Q}\bm{x}_0$ for~$\bm{u}^{\star}\equiv \bm{0}$. 
    Furthermore, the transversality condition reads, in the absence of a terminal cost,~${\bm{\lambda}}^{\star}(T)\overset{\scriptscriptstyle !}{=}\bm{0}$. 
    Thus, the second condition is satisfied together with the transversality condition for~$\bm{\lambda}^{\star}(t)\coloneqq 2\,(t-T)\bm{Q}\bm{x}_0$. 
    The third condition reads $\mathcal{H}(\bm{x}^{\star}, \bm{u}^{\star}, \bm{\lambda}^{\star})\overset{\scriptscriptstyle !}{\geq}\mathcal{H}(\bm{x}^{\star}, \bm{u}, \bm{\lambda}^{\star})\ \forall \bm{u}\in\mathcal{U}$. 
    The left-hand side can be directly evaluated to~$\mathcal{H}(\bm{x}^{\star}, \bm{u}^{\star}, \bm{\lambda}^{\star})=-\bm{x}_0\tran \bm{Q}\bm{x}_0$, whereas the right-hand side yields~$\mathcal{H}(\bm{x}^{\star}, \bm{u}, \bm{\lambda}^{\star})=-\bm{x}_0\tran\bm{Q}\bm{x}_0-\bm{u}\tran\bm{R}\bm{u}+2(t-T)\bm{x}_0\tran\bm{Q}\bm{G}(\bm{x}_0)\bm{u}$. 
    Inserting both into the third condition and simplifying finally leads to
    \begin{align}
        \bm{u}\tran\bm{R}\bm{u}\overset{\scriptscriptstyle !}{\geq}2(t-T)\bm{x}_0\tran\bm{Q}\bm{G}(\bm{x}_0)\bm{u}\overset{\eqref{eq:lemma_1}}{=}\bm{0},
    \end{align}
    which is true due to positive definiteness of~$\bm{R}$. 
\end{proof}

So far, it has been shown that, under condition~\eqref{eq:lemma_1}, the zero control input satisfies necessary first-order optimality conditions. However, in order to conclude the failure of the controller subject to quadratic costs it remains to be proven that this solution is indeed the global optimum. This is shown in the following theorem using sufficient second-order conditions as well as uniqueness of the obtained optimal solution.
\begin{thm}
    Assume that the prerequisites of Lemma~\ref{lem:nec_opt_cond} are fulfilled. 
    Moreover, assume that there exists an $\bm{x}_0\in\mathcal{B}_{\varepsilon}(\bm{0})\setminus \{\bm{0}\}$ for all~$\varepsilon>0$ that satisfies Eq.~\eqref{eq:lemma_1}. 
	Then, for any such initial condition~$\bm{x}_0$, $\bm{u}^\star(\cdot\,\vert\,0)\equiv \bm{0}$ is the unique optimal control input.
    \label{thm:quadratic_cost_insufficient}
\end{thm}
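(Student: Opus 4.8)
The plan is to upgrade the first-order statement of Lemma~\ref{lem:nec_opt_cond} to global uniqueness. Denote by $J(\bm{u})$ the cost in~\eqref{eq:mpc_cost_generic} for initial time $t=0$; by Lemma~\ref{lem:nec_opt_cond} the zero input is an extremal with $J(\bm{0})=T\,\bm{x}_0\tran\bm{Q}\bm{x}_0>0$. First I would restrict the competitors that must be examined. Since $\bm{Q}\succ\bm{0}$ and $\bm{x}(0\,\vert\,0)=\bm{x}_0\neq\bm{0}$, the state-cost integrand is strictly positive along any trajectory, so any input with $\int_0^T\bm{u}\tran\bm{R}\bm{u}\,\textnormal{d}\tau\geq J(\bm{0})$ is \emph{strictly} worse than the zero input. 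Hence every candidate not strictly worse than $\bm{u}\equiv\bm{0}$ has bounded energy $\int_0^T\bm{u}\tran\bm{R}\bm{u}\,\textnormal{d}\tau< J(\bm{0})$, and a Gr\"onwall estimate then confines its trajectory to a neighborhood of $\bm{x}_0$ whose radius scales with $\Vert\bm{x}_0\Vert$. On this bounded, weakly compact input set an optimal control exists by the direct method, and it then suffices to show that the minimizer coincides with the zero input.

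The core step is a second-order comparison on this confined set. Writing $\bm{\delta}(\tau)\coloneqq\bm{x}(\tau\,\vert\,0)-\bm{x}_0$ with $\bm{\delta}(0)=\bm{0}$, I would expand
\begin{align*}
J(\bm{u})-J(\bm{0})={}&\int_0^T\bm{u}\tran\bm{R}\bm{u}\,\textnormal{d}\tau+2\,\bm{x}_0\tran\bm{Q}\!\int_0^T\!\bm{\delta}\,\textnormal{d}\tau\\
&+\int_0^T\bm{\delta}\tran\bm{Q}\bm{\delta}\,\textnormal{d}\tau .
\end{align*}
The decisive use of condition~\eqref{eq:lemma_1} enters through $\frac{\textnormal{d}}{\textnormal{d}\tau}\!\left(\bm{x}_0\tran\bm{Q}\bm{\delta}\right)=\bm{x}_0\tran\bm{Q}\bm{G}(\bm{x}_0)\bm{u}+\bm{x}_0\tran\bm{Q}\bigl(\bm{G}(\bm{x})-\bm{G}(\bm{x}_0)\bigr)\bm{u}$, whose first summand vanishes by~\eqref{eq:lemma_1}, leaving only a contribution that is of second order in the maneuver amplitude. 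Bounding $\bm{G}(\bm{x})-\bm{G}(\bm{x}_0)$ by smoothness and using $\Vert\bm{\delta}\Vert=\mathcal{O}(\Vert\bm{x}_0\Vert)$ on the confined set, the indefinite cross term is estimated by a constant multiple of $\int_0^T\bm{u}\tran\bm{R}\bm{u}\,\textnormal{d}\tau$ whose prefactor tends to zero as $\bm{x}_0\to\bm{0}$. The strictly positive input cost then dominates, giving $J(\bm{u})>J(\bm{0})$ for every admissible $\bm{u}\not\equiv\bm{0}$, i.e.\ uniqueness of the zero input as the global optimum.

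The main obstacle I expect is precisely this cross-term estimate: although the orthogonality condition~\eqref{eq:lemma_1} eliminates the first-order reduction of the state cost, moving toward the origin can still lower it at second order, so one must show that this gain can never outweigh the input penalty. This is what forces both ingredients anticipated in the surrounding text — a sufficient second-order (strict positivity) argument ruling out cost-reducing perturbations, together with the a~priori energy bound and the resulting uniqueness that exclude large maneuvers and competing extremals. The estimate closes cleanly for initial conditions near the origin, which is exactly the regime relevant for the subsequent conclusion that such $\bm{x}_0$ populate every punctured ball $\mathcal{B}_{\varepsilon}(\bm{0})\setminus\{\bm{0}\}$.
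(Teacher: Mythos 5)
Your proposal is correct in substance but takes a genuinely different route from the paper's proof. The paper stays entirely inside the Pontryagin framework: having verified the necessary conditions in Lemma~\ref{lem:nec_opt_cond}, it invokes the Arrow-type sufficiency theorem of~\cite[Theorem~2.1]{Sethi19}, computes the maximized Hamiltonian $\mathcal{H}_0(\bm{x},\bm{\lambda}^{\star})=-\bm{x}\tran\bm{Q}\bm{x}+\tfrac{1}{4}{\bm{\lambda}^{\star}}\tran\bm{G}\bm{R}^{-1}\bm{G}\tran\bm{\lambda}^{\star}$, and shows that its Hessian in $\bm{x}$ is negative definite because the adjoint $\bm{\lambda}^{\star}(t)=2(t-T)\bm{Q}\bm{x}_0$ becomes arbitrarily small as $\bm{x}_0\to\bm{0}$; uniqueness of the optimizer is not re-proven but cited from~\cite{MuellerWorthmann17}. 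You instead compare costs directly: the exact expansion of $J(\bm{u})-J(\bm{0})$, the observation that~\eqref{eq:lemma_1} annihilates the first-order term $\bm{x}_0\tran\bm{Q}\bm{G}(\bm{x}_0)\bm{u}$, and an energy estimate showing the remaining indefinite cross term is dominated by the input penalty. This buys a self-contained argument delivering optimality and uniqueness in a single stroke (which makes your direct-method existence step redundant: once $J(\bm{u})>J(\bm{0})$ for every admissible $\bm{u}\not\equiv\bm{0}$, nothing more is needed), whereas the paper's argument is shorter but leans on two external results. Both share the same essential limitation, namely that the conclusion is established only for $\bm{x}_0$ sufficiently close to the origin; this suffices for the theorem's purpose, since such initial states exist in every punctured ball $\mathcal{B}_{\varepsilon}(\bm{0})\setminus\{\bm{0}\}$.

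One step in your write-up needs care. To bound the cross term by a constant multiple of $E\coloneqq\int_0^T\bm{u}\tran\bm{R}\bm{u}\,\textnormal{d}\tau$ with a prefactor vanishing as $\bm{x}_0\to\bm{0}$, you must pair $\Vert\bm{\delta}\Vert$ with its \emph{energy} bound $\Vert\bm{\delta}\Vert\le C\sqrt{T}\,E^{1/2}$ (Cauchy--Schwarz applied to $\dot{\bm{\delta}}=\bm{G}(\bm{x})\bm{u}$), so that $\int_0^T\Vert\bm{\delta}\Vert\,\Vert\bm{u}\Vert\,\textnormal{d}\tau\le CT E$ and the factor $\Vert\bm{Q}\bm{x}_0\Vert$ alone supplies the vanishing prefactor. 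If instead you substitute the static confinement bound $\Vert\bm{\delta}\Vert=\mathcal{O}(\Vert\bm{x}_0\Vert)$, as your text suggests, the cross term is only controlled by $\mathcal{O}(\Vert\bm{x}_0\Vert^2)\,E^{1/2}$, and the comparison with $\lambda_{\min}(\bm{R})\,E$ fails precisely for low-energy competitors, which are the relevant ones. The role of the confinement is merely to make the Lipschitz constant of $\bm{G}$ and the bound on $\Vert\bm{G}\Vert$ uniform on a fixed compact set; the domination itself must run through $E^{1/2}$. Since all quantities needed for this fix are already in your setup, this is an imprecision rather than a gap, but it is exactly the point where the argument would otherwise break.
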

\begin{proof}
	Once again, this proof can be seen as a generalized version of the one given in~\cite{MuellerWorthmann17} for differentially driven mobile robots based on~\cite{Sethi19}. 
    The idea is to show optimality of~$\bm{u}^\star(\cdot\,\vert\,0)\equiv \bm{0}$ and uniqueness of the optimizer. 
    The proof for the latter is actually given in a general manner in~\cite{MuellerWorthmann17} and also applies for this paper's class of systems. 
    Hence, optimality remains to be proven. 
    According to~\cite[Theorem~2.1]{Sethi19}, for~$\bm{u}^{\star}$, $\bm{x}^{\star}$, $\bm{\lambda}^{\star}$ jointly satisfying the necessary optimality conditions already shown to hold in Lemma~\ref{eq:lemma_1} for~$\bm{u}^{\star}\equiv \bm{0}$,~$\bm{u}^{\star}$ is optimal if the function~$\mathcal{H}_0(\bm{x},\bm{\lambda}^{\star})\coloneqq \max_{\bm{u}}\mathcal{H}(\bm{x},\bm{u},\bm{\lambda}^{\star})$ is concave in~$\bm{x}$ at all times~$t\in [0,T]$ (checking this one condition suffices if there is no terminal cost). 
    To obtain the Hamiltonian's optimizer to compute~$\mathcal{H}_0$, we set~$\frac{\partial }{\partial \bm{u}}\mathcal{H}\overset{\scriptscriptstyle !}{=}\bm{0}$, yielding $\frac{\partial }{\partial \bm{u}}\mathcal{H}=-2\,\bm{u}\tran\bm{R}+{\bm{\lambda}^{\star}}\tran\bm{G}\overset{\scriptscriptstyle !}{=}\bm{0}$, and, using~$\bm{R}\succ \bm{0}$,~$\bm{u}\tran = \frac{1}{2} {\bm{\lambda}^{\star}}\tran \bm{G}\bm{R}^{-1}$. 
    The latter is the unique optimizer since~$\mathcal{H}$ is strictly concave with respect to~$\bm{u}$. 
    Hence, inserting the optimizer yields~$\mathcal{H}_0(\bm{x},{\bm{\lambda}^{\star}})=-\bm{x}\tran\bm{Q}\bm{x}+\frac{1}{4}{\bm{\lambda}^{\star}}\tran\bm{G}\bm{R}^{-1}\bm{G}\tran{\bm{\lambda}^{\star}}$ and concavity can be established by inspecting
    \begin{align}
        \frac{\partial^2}{\partial \bm{x}^2}\mathcal{H}_0 = -2\,\bm{Q}&+\frac{1}{2}{\bm{\lambda}^{\star}}\tran \frac{\partial^2 \bm{G}}{\bm{x}^2}\bm{R}^{-1}\bm{G}\tran{\bm{\lambda}^{\star}}\nonumber\\{}&+\frac{1}{2}{\bm{\lambda}^{\star}}\tran\frac{\partial \bm{G}}{\partial \bm{x}}\bm{R}^{-1}\frac{\partial \bm{G}}{\partial \bm{x}}{\bm{\lambda}^{\star}}.
        \label{eq:hamiltonian_hessian}
    \end{align}
    Since, according to the proof of Lemma~\ref{lem:nec_opt_cond}, the adjoint variable is given by $\bm{\lambda}^{\star}(t)= 2\,(t-T)\bm{Q}\bm{x}_0$, for all~$t\in[0, T]$, the entries of the adjoint variable become arbitrarily small if~$\bm{x}_0$ is chosen arbitrarily close to the origin. 
    The latter is possible due to the assumptions. 
    Since the right-hand side of~\eqref{eq:hamiltonian_hessian} is continuous in the adjoint variable, and, thus, with the eigenvalues being a continuous function of the adjoint variable~\cite[Corollary VI.1.6]{Bhatia97}, for~$\bm{x}_0$ chosen sufficiently close to the origin, the right-hand side of~\eqref{eq:hamiltonian_hessian} is negative definite since~$\bm{Q}$ is positive definite.
\end{proof}
Hence, the assumptions of Theorem~\ref{thm:quadratic_cost_insufficient} form a sufficient condition on when, for the considered class of systems, an MPC controller based on OCP~\eqref{eq:mpc_optprob_generic} with conventional quadratic stage costs does not function properly, meaning that, in every neighborhood of the origin, there exists an initial state for which the controlled system will not even begin to drive toward the origin. 
While the crucial condition~\eqref{eq:lemma_1} seems abstract at first, it has a clear mechanical interpretation. 
Namely, if the weighting matrix~$\bm{Q}$ is chosen as the identity, the condition holds for states which have the same direction in the state space as a non-holonomic constraint of the system evaluated at this point. 
To show the assertion for an arbitrary symmetric and positive definite weighting matrix~$\bm{Q}$ accordingly yields that the direction of the state and the non-holonomic constraint does not need to coincide, although the linear mapping conveyed by the matrix~$\bm{Q}$ merely rotates and scales the situation.  
For the sake of completeness, despite not being necessary to understand the remainder of the paper, the technical calculations in App.~\ref{app:proof} give a sketch on how to show that the assumptions of Theorem~\ref{thm:quadratic_cost_insufficient} are satisfied for popular non-holonomic vehicles. 
However, if that is the case, this leaves the question of how to design a functioning OCP, i.e., in particular its cost function. 
Answers are found in the subsequent sections. 
Since usual first-order approximations such as linearization lead to a critical loss of controllability at the desired setpoint of the non-holonomic systems in question, approaches build on a specific kind of homogeneous approximation, which therefore is looked at first in the following section. 
Later, this approximation is used in a straightforward fashion to construct the tailored non-quadratic cost function in Sec.~\ref{sec:MPC_using_hom_approx}. 

\section{Homogeneous approximation of non-holonomic systems}
\label{sec:homog_approx}
In order to analyze and suitably approximate the considered non-holonomic systems, the individual columns of the input matrix are crucial, motivating to write 
\begin{align}\label{eq:system}
\dot{\bm{x}} = \bm{G}(\bm{x})\bm{u} \eqqcolon \begin{bmatrix} \bm{X}_1(\bm{x}) & \cdots & \bm{X}_{n_u}(\bm{x})\end{bmatrix} \bm{u}, 
\end{align} where $\bm{X}_i(\bm{x}),~i\in\mathbb{Z}_{1:n_u},$ are $C^\infty$ controlling vector fields on $\mathbb{R}^{n_x}$. 
Note that any differentiable manifold can be considered, however, we restrict our investigations without loss of generality to $\mathbb{R}^{n_x}$ in the following. 
The controlling vector fields generate a family of vector spaces~$\bm{\Delta}^1(\bm{x})=\text{span}\{\bm{X}_1(\bm{x}), \ldots, \bm{X}_{n_u}(\bm{x})\}\subset T_{\bm{x}}\mathbb{R}^{n_x}$. 
If the dimension of~$\bm{\Delta}^1(\bm{x})$ is constant over all~$\bm{x}\in\mathbb{R}^{n_x}$ (which is the case in this paper's applications),~$\bm{x}\mapsto \bm{\Delta}^1(\bm{x})$ is a distribution on~$\mathbb{R}^{n_x}$ in the differential geometric sense~\cite{Jean14}. 
In the following, for brevity, we will mostly omit the arguments of vector fields and of quantities calculated from them. 
Recursively, following~\cite{Jean14}, we define $\bm{\Delta}^{s+1} \coloneqq \bm{\Delta}^{s} \cup [\bm{\Delta}^1,\,\bm{\Delta}^s],~s\in\mathbb{Z}_{\geq 1}$, with the addend reading $[\bm{\Delta}^1,\,\bm{\Delta}^s]\coloneqq\text{span}\left\lbrace [\bm{v}_1,\,\bm{v}_2]:\, \bm{v}_1 \in \bm{\Delta}^1, \, \bm{v}_2\in\bm{\Delta}^s \right\rbrace$, i.e., following from Lie bracket operations. 
The union set of all~$\bm{\Delta}^{s}$ is the so-called Lie algebra spanned by the control vector fields, i.e., $\text{Lie}\left(\bm{X}_1,\ldots,\bm{X}_{n_u} \right)=\cup_{s\in\mathbb{Z}_{\geq1}} \bm{\Delta}^s$~\cite{Jean14}. 
The Lie algebra can then be utilized to determine whether a system is controllable or not, e.g., via the Ra\v{s}evskij-Chow theorem or the Lie algebra rank condition~(LARC). 
We assume that the non-holonomic system at hand is controllable which is indeed the case for all non-holonomic vehicles within the scope of this article. It is important to note that the Lie algebra rank condition suffices to show global controllability of driftless, control-affine systems~\eqref{eq:system}, see~\cite[Ch.~3.3]{Coron07}. 
Subsequently, the degree of non-holonomy~$r\in\mathbb{Z}_{\geq 1}$ at~$\bm{x}\in\mathbb{R}^{n_x}$ of a controllable, non-holonomic system~\eqref{eq:system} is defined as the smallest integer such that~$\bm{\Delta}^{r}(\bm{x})=T_{\bm{x}}\mathbb{R}^{n_x}$, i.e., $\textnormal{dim}\left(\bm{\Delta}^{r}(\bm{x})\right)=n_x$.  
Thus, $r$ is a measure of how difficult it is to steer the underactuated system~\eqref{eq:system} to arbitrary setpoints by interleaving its input vector fields. 

The approximation of a non-holonomic system~\eqref{eq:system} in a way such that its crucial structural properties, particularly its controllability, are retained is not a trivial task. It can be shown that linearizing the system results in the loss of controllability~\cite{Jean14}, such that the linearized system does not represent the local behavior of the original nonlinear system. 
Following~\cite{Jean14}, the linearization is a first-order approximation of nonlinear systems in Euclidean geometry while non-holonomic systems usually must be handled with sub-Riemannian geometry. 
Therefore, the first-order approximation of these systems should be done with respect to sub-Riemannian geometry, motivating so-called homogeneous approximations. 
Homogeneous systems, which can be seen as a generalization of linear systems, have received significant attention in the field of nonlinear control theory~\cite{Coron07,Gruene00,Jean14}. Following~\cite{CoronGrueneWorthmann20}, for given tuples~$\bm{r}=(r_1,\ldots,r_{n_x})\in(0,\infty)^{n_x}$, $\bm{s}=(s_1,\ldots,s_{n_u})\in(0,\infty)^{n_u}$, and $\tau\in(-\min_i r_i, \infty)$, system~\eqref{eq:system} is $(\bm{r},\bm{s},\tau)$-homogeneous if it holds that
\begin{align}
	\bm{G}(\bm{\Lambda}_\alpha \bm{x}) \bm{\Delta}_{\alpha} \bm{u} = \alpha^\tau \bm{\Lambda}_\alpha \bm{G}(\bm{x}) \bm{u}
\end{align} for all $\bm{x}\in\mathbb{R}^{n_x}$, $\bm{u}\in\mathbb{R}^{n_u}$, and $\alpha\in\mathbb{R}_{\geq0}$ with the diagonal matrices $\bm{\Lambda}_\alpha\coloneqq \text{diag}(\alpha^{r_1},\ldots,\alpha^{r_{n_x}})\in\mathbb{R}^{n_x \times n_x}$ and $\bm{\Delta}_\alpha\coloneqq \text{diag}(\alpha^{s_1},\ldots,\alpha^{s_{n_u}})\in\mathbb{R}^{n_u \times n_u}$. 
Notably, if the dynamics~\eqref{eq:system} is globally Lipschitz continuous, the degree of homogeneity is~$\tau=0$~\cite{Gruene00}.

However, unlike linearizing the system, it is far from trivial to derive a homogeneous approximation of general non-holonomic systems of the form~\eqref{eq:system}. 
One constructive approach is using so-called homogeneous nilpotent approximations. 
To this end, it is inevitable to first introduce some definitions and tools from sub-Riemannian geometry, mostly following~\cite{Jean14}.

\subsection{Preliminaries from sub-Riemannian geometry}
First-order approximations are in general concerned with respect to a fixed setpoint which is denoted as $\bm{d}\in\mathbb{R}^{n_x}$ subsequently. 
A Lie derivative of a function ${f}\in C^\infty$ along a vector field~$\bm{X}_i$, i.e., $\bm{X}_i {f}$, is also known as a first-order non-holonomic derivative of ${f}$ along $\bm{X}_i$. 
Consequently, $\bm{X}_i(\bm{X}_j {f})$ and $\bm{X}_i(\bm{X}_j(\bm{X}_k {f}))$ are non-holonomic derivatives of ${f}$ of order $2$ and $3$, respectively, along the respective vector fields, and so on. 
Fittingly,~${f}(\bm{d})$ is the non-holonomic derivative of order~$0$ at $\bm{d}$. 
The non-holonomic order of ${f}$ at $\bm{d}$, denoted by $\text{ord}_{\bm{d}}({f})$, is the largest integer~$k$ for which all non-holonomic derivatives of ${f}$ of order smaller than ${k}$ vanish at $\bm{d}$. 
Consequently, $\text{ord}_{\bm{d}}({f})=0$ holds if ${f}(\bm{d})\neq0$. 
Analogously, the non-holonomic order of a vector field $\bm{X}$ at $\bm{d}$, i.e., $\text{ord}_{\bm{d}}(\bm{X})$, is defined as the supremal $q\in\mathbb{R}$ such that it holds that $\text{ord}_{\bm{d}}(\bm{X}f)\geq q+\text{ord}_{\bm{d}} (f)$ for all $f\in C^\infty (\bm{d})$. For a more detailed insight regarding non-holonomic derivatives and orders, we refer the interested reader to~\cite{Jean14}.  
It can be shown that the control vector fields of~\eqref{eq:system} are of order $\geq -1$, the resulting Lie brackets $[\bm{X}_i, \, \bm{X}_j],~i,j\in\mathbb{Z}_{1:n_u}$, are of order $\geq -2$, and so on~\cite{Vendittelli04}. 
Based on~\cite{Vendittelli04}, we can also clarify the term nilpotent. 
The approximation of~\eqref{eq:system} is said to be nilpotent at $\bm{d}$, if the difference of the original and approximated vector fields are of order $\geq 0$ at $\bm{d}$ and its Lie algebra is nilpotent of step $s>r(\bm{d})$. 
The first property ensures the preservation of controllability~\cite{Vendittelli04}, whereas the latter means that all Lie brackets obtained through recursively nesting Lie brackets~$s$ times are zero~\cite{Jean14}. 
For instance, if the brackets of the form $[\bm{X}_i, \, [\bm{X}_j,\bm{X}_k]]$ evaluate to zero for all~$i,j,k\in\mathbb{Z}_{1:n_u}$, the corresponding Lie algebra is nilpotent of step~2 since two brackets are nested in the expression. 
Inductively, if the brackets of the form~$[\bm{X}_i,[\bm{X}_j, \, [\bm{X}_k,\bm{X}_\ell]]]$ evaluate to zero for all~$i,j,k,.\ell\in\mathbb{Z}_{1:n_u}$, the corresponding Lie algebra is nilpotent of step~3, and so forth. 

Further, with $n_i(\bm{d}) \coloneqq \text{dim}\left(\bm{\Delta}^i(\bm{d})\right)$, the $r$-tuple of integers $\bm{n}(\bm{d}) = \left( n_1 (\bm{d}),\ldots,n_{r}(\bm{d}) \right)$ is the so-called growth vector at $\bm{d}$ where it holds for the given system~\eqref{eq:system} that $n_1(\bm{d})=\textnormal{rank}(\bm{G}(\bm{d}))$ and $n_r(\bm{d})=n_x$ with~$r$ being the degree of non-holonomy~\cite{Jean14}. 
Subsequently, we assume that the non-holonomic system in question is only analyzed and approximated at so-called regular points~$\bm{d}$, meaning that the growth vector and, in particular, the degree of non-holonomy are constant locally around~$\bm{d}$~\cite{Vendittelli04}.
Following~\cite{Jean14}, we define the weights $w_i = w_i(\bm{d}),~i\in\mathbb{Z}_{1:n_x},$ at $\bm{d}$ as $w_j=s$ if $n_{s-1}(\bm{d})<j\leq n_s(\bm{d})$, where $n_0=0$. For brevity, we consolidate the weights in the $n_x$-tuple $\bm{w}(\bm{d})$. Using the weights, a so-called adapted frame at $\bm{d}$ is formed which is a family of vector fields $\bm{X}_1,\ldots,\bm{X}_{n_x}$ such that $\{\bm{X}_1(\bm{d}),\ldots,\bm{X}_{n_x}(\bm{d}) \}$ is a basis of the tangent space~$T_{\bm{d}}\mathbb{R}^{n_x}$ of~$\mathbb{R}^{n_x}$ at~$\bm{d}$, and $\bm{X}_i\in\bm{\Delta}^{w_i},~i\in\mathbb{Z}_{1:n_x}$~\cite{Jean14}. Moreover, the weights are then used to define a specific system of local coordinates which will be crucial for the construction of tailored MPC cost functions. One way to obtain these coordinates is dealt with subsequently.

\subsection{Privileged coordinates}
The so-called privileged coordinates~$\bm{z}$ at a point $\bm{d}$ are a system of local coordinates with maximum possible order, i.e., they satisfy $\text{ord}_{\bm{d}} (z_j)=w_j$ for all $j\in\mathbb{Z}_{1:n_x}$. These coordinates are vital to derive the desired homogeneous nilpotent approximation of~\eqref{eq:system}. One type of privileged coordinates are algebraic coordinates which can be be constructed, for example, using Bella\"{i}che's algorithm~\cite{Bellaiche96,Jean14}. 
This algorithm can be divided into two substeps. Starting point is an adapted frame $\{\bm{X}_1(\bm{d}),\ldots,\bm{X}_{n_x}(\bm{d}) \}$ following from the Lie algebra $\text{Lie}\left(\bm{X}_1,\ldots,\bm{X}_{n_u} \right)$.  
In the first step, the original coordinates $\bm{x}$ are transformed into the coordinates $\bm{y}$ via 
\begin{align}\label{eq:Trafo1Bellaiche}
	\bm{y} = \left.\begin{bmatrix} \bm{X}_1\tran \\ \vdots \\ \bm{X}_{n_x}\tran \end{bmatrix}\right|_{\bm{d}}^{-\mkern-1.5mu\mathsf{T}} (\bm{x}-\bm{d})\eqqcolon \bm{A}^{-\mkern-1.5mu\mathsf{T}} (\bm{x}-\bm{d}),
\end{align} 
For a derivation of this transformation we refer to~\cite{HrdinaEtAl2017}. 
In the second step of the algorithm, the system of privileged coordinates $\bm{z}$ follows from
\begin{align}\label{eq:Trafo2Bellaiche}
	&z_j = y_j - \sum_{k=2}^{w_j-1} h_{j,k} \left( \bm{y} \right), \quad j\in\mathbb{Z}_{1:n_x}, \nonumber\\
	&h_{j,k} (\bm{y}) =  \\
	&\sum_{\substack{|\bm{\alpha}|=k \\ w(\bm{\alpha}) < w_j}}\hspace{-10pt} \bm{Y}_1^{\alpha_1} \cdots \bm{Y}_{j-1}^{\alpha_{j-1}} \left.\left( y_j - \sum_{q=2}^{k-1} h_{j,q} (\bm{y})\right)\right|_{\bm{d}} \prod_{i=1}^{j-1} \frac{y_i^{\alpha_i}}{\alpha_i !}, \nonumber
\end{align} 
$k\in\mathbb{Z}_{2:w_j-1}$, with $\bm{\alpha}=(\alpha_1,\ldots,\alpha_{n_x})$ being a multi-index,~$\alpha_i\in\mathbb{N}_0$, $|\bm{\alpha}| = \sum_{i=1}^{n_x} \alpha_i$, and $w(\bm{\alpha})=\sum_{i=1}^{n_x} w_i \alpha_i$. 
Note that $\bm{Y}_i$ denotes the corresponding $i$th~vector field following from the transformation~\eqref{eq:Trafo1Bellaiche}, i.e., $\bm{Y}_k=\frac{\partial \bm{y}}{\partial\bm{x}}\bm{X}_k$. 
Moreover, $\bm{Y}_a^b (c)\vert_{\bm{d}}$ denotes the non-holonomic derivative of the function $c$ of order $b$ along the vector field $\bm{Y}_a$ evaluated at $\bm{d}$. 
Crucially, the function~$h_{j,k}$ actually only depends on the first $j-1$~entries of~$\bm{y}$, i.e., $h_{j,k}(\bm{y})=h_{j,k} (y_1,\ldots,y_{j-1})$. 
It is worth noting that the second step~\eqref{eq:Trafo2Bellaiche} yields~$\bm{y}=\bm{z}$ for the considered non-holonomic vehicles at the origin, greatly simplifying the derivation of the privileged coordinates. 
In particular, the previously presented holistic procedure based on sub-Riemannian geometry simplifies to finding an adapted frame and evaluating~\eqref{eq:Trafo1Bellaiche} for this case. 
However, this does not hold in general for other setpoints or systems.

\subsection{Homogeneous nilpotent approximation}
After having established the privileged coordinates~$\bm{z}$ of the system centered at the approximation point~$\bm{d}$, finally a homogeneous nilpotent approximation of~\eqref{eq:system} can be derived, again following~\cite{Jean14}. 
In particular, the control vector fields corresponding to the privileged coordinates can be expanded as a Taylor series $\bm{Z}_i(\bm{z})\propto \sum_{\bm{\alpha},j} a_{\bm{\alpha},j} \bm{z}^{\bm{\alpha}} \frac{\partial}{\partial z_j},~i\in\mathbb{Z}_{1:n_u}$, with $a_{\bm{\alpha},j}\in\mathbb{R}$. 
Therein, the so-called weighted degree of a monomial $\bm{z}^{\bm{\alpha}}=\prod_{k=1}^{n_x} z_k^{\alpha_k}$ is defined as $w(\bm{\alpha})\coloneqq\sum_{k=1}^{n_x} w_k \alpha_k$, with~$w_k$ being the weights as defined before, satisfying~$\textnormal{ord}_{\bm{d}}(z_j)=w_j$.  
Note that $\bm{\alpha}$ is another multi-index which does not correlate with the previously introduced multi-index corresponding to the second step~\eqref{eq:Trafo2Bellaiche} of Bella\"{i}che's algorithm. 
Then, the weighted degree of the monomial vector field $\bm{z}^{\bm{\alpha}}\frac{\partial}{\partial z_j}$ reads $w(\bm{\alpha})-w_j$~\cite{Jean14}. Grouping together the monomial vector fields having the same weighted degree, the control vector fields $\bm{Z}_i,~i\in\mathbb{Z}_{1:n_u}$, can be expressed as a series in terms of
\begin{align}\label{eq:vector_field_Taylor}
	\bm{Z}_i (\bm{z}) = \bm{Z}_i^{[-1]}(\bm{z}) + \bm{Z}_i^{[0]}(\bm{z}) + \bm{Z}_i^{[1]}(\bm{z}) + \dots,
\end{align} 
where $\bm{Z}_i^{[s]}$ is a vector field with weighted degree~$s$.
Truncating after the first term, the vector fields $\bm{Z}_1^{[-1]},\ldots,\bm{Z}_{n_u}^{[-1]}$ generate a Lie algebra that is nilpotent of step $w_{n_x}$ and, moreover, is a first-order approximation of the control vector fields $\bm{Z}_1,\ldots,\bm{Z}_{n_u}$ at $\bm{d}$~\cite{Jean14}. 
Thus, we obtain a homogeneous nilpotent approximation of the original system at $\bm{d}$ related to the privileged coordinates $\bm{z}$ reading
\begin{align}\label{eq:system_approx}
	\dot{\tilde{\bm{z}}} = \begin{bmatrix} \bm{Z}_1^{[-1]} (\tilde{\bm{z}}) & \cdots &  \bm{Z}_{n_u}^{[-1]} (\tilde{\bm{z}})  \end{bmatrix} \bm{u}.
\end{align} 
We emphasize that this is only an approximation of the original dynamics~\eqref{eq:system} at $\bm{d}$ by using the tilde symbol since~\eqref{eq:system} can also exactly be expressed in the privileged coordinates~$\bm{z}$. It is worth noting that the resulting non-holonomic approximation~\eqref{eq:system_approx} is  polynomial and in triangular form, i.e., $\dot{\tilde{z}}_j = \sum_{i=1}^{n_u} g_{ij}(\tilde{z}_1,\ldots,\tilde{z}_{j-1})u_i$~\cite{Vendittelli04}. 

As hinted at, due to being homogeneous, the system approximation~\eqref{eq:system_approx} is useful to construct a cost function for an MPC optimization problem geometrically tailored to the system's non-holonomic constraints. 
Exactly how this is furnished is the topic of the subsequent section. 
Afterwards, applications will give practical insight on how to apply the theory hitherto described. 

\section{Tailored MPC based on homogeneous approximation}
\label{sec:MPC_using_hom_approx}
The overall goal is to derive a tailored cost function for non-holonomic vehicles since the standard quadratic cost is insufficient in all following examples, cf.~Sec.~\ref{sec:quadratic_cost} and App.~\ref{app:proof}. 
We can combine the findings of the previous section with results from~\cite{CoronGrueneWorthmann20}, where the gap from homogeneous approximations to model predictive control without stabilizing terminal ingredients was recently overcome. 
In~\cite{CoronGrueneWorthmann20}, as usual in MPC without terminal cost or constraints, cost controllability is shown using a growth bound on the value function w.r.t. time using tailored costs. 
Specifically, cost controllability is shown for systems with known, globally asymptotically null controllable homogeneous approximations using robustness of homogeneous control Lyapunov functions~\cite{Gruene00}. 
For that, the approximation's residuum needs to satisfy a certain condition, giving an indication of the approximation quality. 
Based on this, if the system approximation is $(\bm{r},\bm{s},\tau)$-homogeneous, see Sec.~\ref{sec:homog_approx}, the tailored stage cost $\ell:\,\mathbb{R}^{n_x}\times\mathbb{R}^{n_u}\to\mathbb{R}_{\geq 0}$ proposed in~\cite{CoronGrueneWorthmann20} is homogeneous and reads
\begin{align}\label{eq:MPC_cost}
	\ell (\bm{z},\bm{u}) \coloneqq \sum_{i=1}^{n_x} {z}_i^{\frac{d}{r_i}} + \sum_{j=1}^{n_u} {u}_j^{\frac{d}{s_j}}, \quad d\coloneqq 2 \prod_{i=1}^{n_x} r_i.
\end{align} 
Here and in the following, without loss of generality, the system is assumed to be expressed in terms of the privileged coordinates~$\bm{z}$. 
Importantly, it is possible to tune the cost~\eqref{eq:MPC_cost} with positive weighting factors on each state and input, and~$d$ may actually be chosen arbitrarily on the interval~$(0,\infty)$, see~\cite[Rem.\ 4.7]{CoronGrueneWorthmann20}. 
We make use of this generalization in our applications later on. 
In order to conclude that there exists a (sufficiently large) prediction horizon which renders the closed loop subject to the tailored cost~\eqref{eq:MPC_cost} (locally) asymptotically stable, it remains to be shown that~\eqref{eq:system_approx} is indeed a homogeneous approximation of the original system in terms of~\cite[Def.~4.1]{CoronGrueneWorthmann20}. 
For all systems considered in this paper's application, this indeed is the case and can be readily verified. 

For the considered class of systems, the underlying OCP to be solved in each sampling instant reads 
\begin{subequations}
    \begin{alignat}{3}%
            &\hspace{0pt} \underset{\bm{u}(\cdot\,\vert\, t)}{\text{minimize}}
            &&\hspace{6pt}\!\int_{t}^{t+T}\!\ell(\bm{z}(\tau\,\vert\, t),\bm{u}(\tau\,\vert\, t))\; \textnormal{d}\tau \label{eq:mpc_cost_specific}\\
            & \text{subject to}
            &&\hspace{6pt}\dot{\bm{z}}(\tau\,\vert\, t)=\bm{G}(\bm{z}(\tau\,\vert\, t))\bm{u}(\tau\,\vert\, t),\label{eq:mpc_dynamics_specific}\\
            &&&\hspace{6pt} \bm{u}(\tau \,\vert\, t)\in\mathcal{U}\quad \forall \tau \in [t,t+T),\\
            &&&\hspace{6pt}\bm{z}(t\,\vert\, t)=\bm{z}(t),
    \end{alignat}%
    \label{eq:mpc_optprob_specific}%
\end{subequations}%
where the stage cost~$\ell$ follows from~\eqref{eq:MPC_cost} using the privileged coordinates derived in Sec.~\ref{sec:homog_approx}. 
Note that the OCP is subject to the actual dynamics~\eqref{eq:mpc_dynamics_specific} expressed in the privileged coordinates and not the approximate dynamics~\eqref{eq:system_approx}. 
The homogeneous system approximation is therefore only used to derive the cost function, not to predict the system behavior. 

It is now evident for which special cases the authors' original suggestion~\cite{RosenfelderEbelEberhard21} of a direct mechanical interpretation of the cost function, namely that the tailored cost follows from dot products of the state~$\bm{x}$ and the vector fields spanning the Lie algebra evaluated at the setpoint, applies. In case that there is an adapted frame which is an orthogonal basis of $T_{\bm{d}}\mathbb{R}^{n_x}$, it holds that $\bm{A}^{-{\mkern-1.5mu\mathsf{T}}} = \bm{A}$ in the first transformation step~\eqref{eq:Trafo1Bellaiche} of Bella\"{i}che's algorithm. 
Consequently, if, additionally, the second step yields~$\bm{z}=\bm{y}$, which is not uncommon in applications, the privileged coordinates directly follow from the dot product of the vector fields spanning the Lie algebra evaluated at the setpoint and the system's original coordinates~$\bm{x}$. 

Apparently, the utilized optimization problem~\eqref{eq:mpc_optprob_specific}, which (locally) asymptotically stabilizes non-holonomic systems of the form~\eqref{eq:system}, is quite neat and concise, especially compared to other control techniques in the context of non-holonomic systems. In combination with the constructive procedure to derive privileged coordinates, see Sec.~\ref{sec:homog_approx}, this yields a control scheme that is easily applicable to a wide range of non-holonomic systems, as we exemplify in the subsequent section. 

\section{Application to non-holonomic vehicles}
\label{sec:MPC_for_nonhol_vehicles}
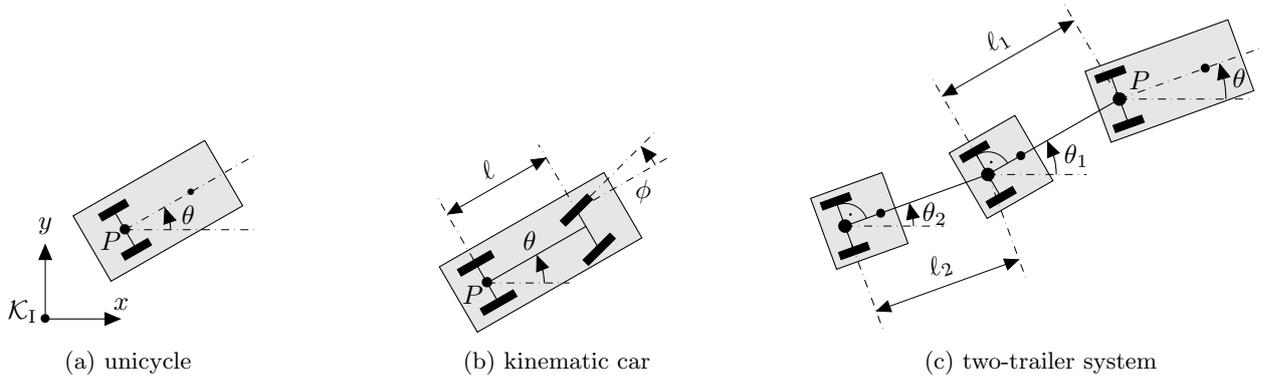
\begin{figure*}[t]
    \centering
	\begin{subfigure}[t]{0.3\linewidth}
		\centering

\begin{tikzpicture}[>= triangle 45]

\def\lC{1}
\def\angle{30}
\def\angleDes{65}
\def\xposR{1}
\def\yposR{-.0}
\def\lengthDD{2}
\def\heightDD{1}

\pgfmathsetmacro\lengthWheel{0.2*\lengthDD}
\pgfmathsetmacro\heightWheel{0.1*\heightDD}
\pgfmathsetmacro\wheelxR{\xposR+0.15*\lengthDD}
\pgfmathsetmacro\wheelxMR{\wheelxR+0.5*\lengthWheel}
\pgfmathsetmacro\wheelLyR{\yposR+(1-0.15)*\heightDD-\heightWheel}
\pgfmathsetmacro\wheelRyR{\yposR+0.15*\heightDD}

\pgfmathsetmacro\wheelMyR{\yposR+0.5*\heightDD}
\pgfmathsetmacro\wheelxMRangle{\wheelxMR+\lengthDD}
\pgfmathsetmacro\wheelMyRhor{\wheelMyR-cos(\angle)*sin(\angle)*\lengthDD}
\pgfmathsetmacro\wheelxMRangleHor{\wheelxMR+cos(\angle)*cos(\angle)*\lengthDD}
\pgfmathsetmacro\radAngle{0.3*\lengthDD}
\pgfmathsetmacro\drawAngleX{\wheelxMR+\radAngle*cos(\angle)}
\pgfmathsetmacro\drawAngleY{\wheelMyR-\radAngle*sin(\angle)}

\pgfmathsetmacro\wheelFrontX{\wheelxMR+0.5*\lengthDD}
\pgfmathsetmacro\wheelFrontY{\wheelMyR}

\pgfmathsetmacro\wheelxMRa{\wheelxMR-1.65*\lengthWheel}

\node [black] at (0,0) {\footnotesize \textbullet};
	\node at (-0.3,0.1) {$\mathcal{K}_{\text{I}}$};
\draw[->, thin]  (0,0) -- (\lC,0) node[above, xshift = 0cm, yshift = -0cm] {$x$};
\draw[->, thin]  (0,0) -- (0,\lC) node[above, xshift = 0cm, yshift = -0cm] {$y$};

\begin{scope}[rotate = \angle]
\draw[draw = black, fill = grayCar] (\xposR,\yposR) rectangle ++(\lengthDD,\heightDD);
\draw[draw = black, fill = black] (\wheelxR,\wheelLyR) rectangle ++(\lengthWheel,\heightWheel);
\draw[draw = black, fill = black] (\wheelxR,\wheelRyR) rectangle ++(\lengthWheel,\heightWheel);
\draw[] (\wheelxMR,\wheelRyR) -- (\wheelxMR,\wheelLyR);
\draw[very thin, dashdotted] (\wheelxMR,\wheelMyR) -- (\wheelxMRangle,\wheelMyR);
\draw[very thin, dashdotted] (\wheelxMR,\wheelMyR) -- (\wheelxMRangleHor,\wheelMyRhor);
\draw[thin, ->] (\drawAngleX,\drawAngleY) arc(-\angle:0:\radAngle) node[right, xshift = 0.1cm, yshift = -0.1cm] {$\theta$};
\node [] at (\wheelFrontX,\wheelFrontY) {\tiny \textbullet};
\node [] at (\wheelxMR,\wheelMyR) {\textbullet};
\node [right, xshift = -0.45cm, yshift = -0.15cm] at (\wheelxMR,\wheelMyR) {$P$};

\end{scope}


%

\end{tikzpicture}

		\caption[]%
		{{unicycle}}
		\label{fig:sketch_unicycle}
	\end{subfigure}
	\hfill
    \begin{subfigure}[t]{0.3\linewidth}
		\centering
		\definecolor{coordinates}{RGB}{0, 0, 0}

\begin{tikzpicture}[>= triangle 45]

\def\lC{1}
\def\angle{30}
\def\angleDes{65}
\def\angleFront{15}
\def\xposR{0}
\def\yposR{0}
\def\lengthDD{2.5}
\def\heightDD{1}

\pgfmathsetmacro\lengthWheel{0.2*\lengthDD}
\pgfmathsetmacro\heightWheel{0.1*\heightDD}
\pgfmathsetmacro\wheelxR{\xposR+0.075*\lengthDD}
\pgfmathsetmacro\wheelxMR{\wheelxR+0.5*\lengthWheel}
\pgfmathsetmacro\wheelLyR{\yposR+(1-0.15)*\heightDD-\heightWheel}
\pgfmathsetmacro\wheelRyR{\yposR+0.15*\heightDD}

\pgfmathsetmacro\wheelMyR{\yposR+0.5*\heightDD}
\pgfmathsetmacro\wheelMyRL{\wheelLyR}
\pgfmathsetmacro\wheelMyRR{\wheelRyR}
\pgfmathsetmacro\wheelxMRangle{\wheelxMR+\lengthDD + 0.5}

\pgfmathsetmacro\wheelMyRhor{\wheelMyR-cos(\angle)*sin(\angle)*\lengthDD*0.5}
\pgfmathsetmacro\wheelxMRangleHor{\wheelxMR+cos(\angle)*cos(\angle)*\lengthDD*0.5}
\pgfmathsetmacro\radAngle{0.3*\lengthDD}

\pgfmathsetmacro\drawAngleX{\wheelxMR+\radAngle*cos(\angle)}
\pgfmathsetmacro\drawAngleY{\wheelMyR-\radAngle*sin(\angle)}

\pgfmathsetmacro\wheelFrontX{\wheelxMR+0.5*\lengthDD}
\pgfmathsetmacro\wheelFrontYL{\wheelMyRL-0.5*\heightWheel}
\pgfmathsetmacro\wheelFrontYR{\wheelMyRR-0.5*\heightWheel}
\pgfmathsetmacro\wheelFrontXC{\wheelFrontX+0.5*\lengthWheel}
\pgfmathsetmacro\radPhi{1.2}
\pgfmathsetmacro\drawPhiX{\wheelFrontXC+\radPhi}
\pgfmathsetmacro\drawPhiY{\wheelMyRL}

\pgfmathsetmacro\wheelxMRa{\wheelxMR-1.65*\lengthWheel}
\pgfmathsetmacro\wheelLyRb{\wheelLyR+1*\heightWheel}


\begin{scope}[rotate = \angle]
\draw[draw = black, fill = grayCar] (\xposR,\yposR) rectangle ++(\lengthDD,\heightDD);
\draw[draw = black, fill = black] (\wheelxR,\wheelLyR) rectangle ++(\lengthWheel,\heightWheel);
\draw[draw = black, fill = black] (\wheelxR,\wheelRyR) rectangle ++(\lengthWheel,\heightWheel);
\draw[] (\wheelxMR,\wheelRyR) -- (\wheelxMR,\wheelLyR);
\draw[very thin, dashdotted, draw = coordinates] (\wheelxMR,\wheelMyR) -- (\wheelxMRangleHor,\wheelMyRhor);
\draw[thin, ->, draw = coordinates] (\drawAngleX,\drawAngleY) arc(-\angle:0:\radAngle) node[right, xshift = -0.3cm, yshift = 0.185cm, coordinates] {$\theta$};
\draw[] (\wheelxMR,\wheelMyR) -- (\wheelFrontXC,\wheelMyR);
\draw[draw = black, fill = black, rotate around={\angleFront:(\wheelFrontXC,\wheelMyRL)}] (\wheelFrontX,\wheelFrontYL) rectangle ++(\lengthWheel,\heightWheel);
\draw[draw = black, fill = black, rotate around={\angleFront:(\wheelFrontXC,\wheelMyRR)}] (\wheelFrontX,\wheelFrontYR) rectangle ++(\lengthWheel,\heightWheel);
\draw[] (\wheelFrontXC,\wheelMyRL) -- (\wheelFrontXC,\wheelMyRR);
\node [coordinates] at (\wheelxMR,\wheelMyR) {\textbullet};
\node [right, xshift = -0.45cm, yshift = -0.15cm, coordinates] at (\wheelxMR,\wheelMyR) {$P$};

\def\lPhi{1.5}
\pgfmathsetmacro\xPhi{\wheelFrontXC+\lPhi*cos(\angleFront)}
\pgfmathsetmacro\yPhi{\wheelMyRL+\lPhi*sin(\angleFront)}
\draw[very thin, dashdotted] (\wheelFrontXC,\wheelMyRL) -- (\xPhi,\yPhi);
\draw[thin, ->, draw = coordinates] (\drawPhiX,\drawPhiY) arc(0:\angleFront:\radPhi) node[right, xshift = -0.2cm, yshift = -0.6cm, coordinates] {$\phi$};
\draw[very thin, dashdotted, draw = coordinates] (\wheelFrontXC,\wheelMyRL) -- ++(1.4,0);

\draw[very thin, dashdotted] (\wheelxMR,\wheelLyRb) -- ++(0,1);
\draw[very thin, dashdotted] (\wheelFrontXC,\wheelFrontYL) -- ++(0,1.05);
\draw[very thin, <->] (\wheelxMR,\wheelLyRb+0.7) -- ++(1.5,0) node[midway, above, rotate = \angle] { $\ell$};

\end{scope}

\end{tikzpicture}
		\caption[]%
		{{kinematic car}}
		\label{fig:sketch_kinCar}
	\end{subfigure}
    \hfill
    \begin{subfigure}[t]{0.38\linewidth}
		\centering
		\definecolor{coordinates}{RGB}{0, 0, 0}

\begin{tikzpicture}[>= triangle 45]

\def\lC{1}
\def\lTrailer{2}
\def\angle{20}
\def\angleT{10}
\def\angleTB{-10}
\def\angleDes{65}
\def\xpos{9.5}
\def\ypos{1.5}
\def\lengthDD{2}
\def\heightDD{1}
\def\lengthT{1}
\def\heightT{\heightDD}

\pgfmathsetmacro\offsetP{0.15*\lengthDD}
\pgfmathsetmacro\lengthWheel{0.2*\lengthDD}
\pgfmathsetmacro\heightWheel{0.1*\heightDD}

\pgfmathsetmacro\wheelxR{-0.5*\lengthWheel}
\pgfmathsetmacro\wheelxMR{\wheelxR+0.5*\lengthWheel}
\pgfmathsetmacro\wheelLyR{0.9*\heightDD-\heightWheel - 0.5*\heightDD}
\pgfmathsetmacro\wheelRyR{0.1*\heightDD- 0.5*\heightDD}

\pgfmathsetmacro\wheelMyR{\ypos+0.5*\heightDD}
\pgfmathsetmacro\wheelxMRangle{\wheelxMR+\lengthDD}
\pgfmathsetmacro\radAngle{0.7*\lengthDD}
\pgfmathsetmacro\drawAngleX{\radAngle*cos(\angle)}
\pgfmathsetmacro\drawAngleY{-\radAngle*sin(\angle)}
\pgfmathsetmacro\wheelFrontX{\wheelxMR+0.6*\lengthDD}

	\pgfmathsetmacro\wheelFrontT{\wheelxMR+0.5*\lengthT}
	\pgfmathsetmacro\radAngleT{0.9*\lengthT}
	\pgfmathsetmacro\drawAnglexT{\radAngleT*cos(\angleT+\angle)}
	\pgfmathsetmacro\drawAngleyT{-\radAngleT*sin(\angleT+\angle)}
	
	\pgfmathsetmacro\drawAnglexTB{\radAngleT*cos(\angleTB+\angleT+\angle)}
	\pgfmathsetmacro\drawAngleyTB{-\radAngleT*sin(\angleTB+\angleT+\angle)}


\begin{scope}[xshift = \xpos cm, yshift = \ypos cm, rotate = \angle]
	\draw[draw = black, fill = grayCar] (-\offsetP, -0.5*\heightDD) rectangle ++(\lengthDD,\heightDD);
	\draw[draw = black, fill = black] (\wheelxR,\wheelLyR) rectangle ++(\lengthWheel,\heightWheel);
	\draw[draw = black, fill = black] (\wheelxR,\wheelRyR) rectangle ++(\lengthWheel,\heightWheel);
	\draw[] (\wheelxMR,\wheelRyR) -- (\wheelxMR,\wheelLyR);
	\draw[very thin, dashdotted] (0,0) -- ++(\lengthDD,0);
	\draw[thin, ->, draw = coordinates] (\drawAngleX,\drawAngleY) arc(-\angle:0:\radAngle) node[right, xshift = 0.05cm, yshift = -0.2cm, color = coordinates] {$\theta$};
	\node [] at (\wheelFrontX,0)[circle,fill,inner sep=1.2pt]{};
	\node [color = coordinates] at (0,0)[circle,fill,inner sep=1.8pt]{};
	\node [right, xshift = -0.0cm, yshift = 0.25cm, color = coordinates] at (\wheelxMR,0) {$P$};
	\coordinate (originDIANA) at (0,0);

	\begin{scope}[rotate = \angleT, xshift = -\lTrailer cm]
		\coordinate (originT) at (0,0);
		\draw[draw = black, fill = grayCar] (-\offsetP, -0.5*\heightT) rectangle ++(\lengthT,\heightT);
		\node [] at (0,0)[circle,fill,inner sep=1.8pt]{};
		\draw[draw = black, fill = black] (\wheelxR,\wheelLyR) rectangle ++(\lengthWheel,\heightWheel);
		\draw[draw = black, fill = black] (\wheelxR,\wheelRyR) rectangle ++(\lengthWheel,\heightWheel);
		\draw[] (\wheelxMR,\wheelRyR) -- (\wheelxMR,\wheelLyR);
		\draw[thin, ->, draw = coordinates] (\drawAnglexT,\drawAngleyT) arc(-\angleT-\angle:0:\radAngleT)
		 node[right, xshift = 0.1cm, yshift = -0.2cm, color = coordinates] {$\theta_1$};
		\node [] at (\wheelFrontT,0)[circle,fill,inner sep=1.2pt]{};
		\draw[] (0,0) -- (originDIANA);
		\draw[very thin] (0.3,0) arc(0:90:0.3);
		\node [] at (0.12,0.12)[circle,fill,inner sep=0.2pt]{};
		\def\heightNote{1.5}
		\draw[very thin, dashdotted] (0,0) -- ++(0, \heightNote);
		\draw[very thin, dashdotted] (\lTrailer,0) -- ++(0, \heightNote);
		\draw[very thin, <->] (0,0.8*\heightNote) -- ++(\lTrailer, 0) node[midway, above, rotate = \angle+\angleT] {$\ell_1$};
		
		\begin{scope}[rotate = \angleTB, xshift = -\lTrailer cm]
			\coordinate (originTB) at (0,0);
			\draw[draw = black, fill = grayCar] (-\offsetP, -0.5*\heightT) rectangle ++(\lengthT,\heightT);
			\node [] at (0,0)[circle,fill,inner sep=1.8pt]{};
			\draw[draw = black, fill = black] (\wheelxR,\wheelLyR) rectangle ++(\lengthWheel,\heightWheel);
			\draw[draw = black, fill = black] (\wheelxR,\wheelRyR) rectangle ++(\lengthWheel,\heightWheel);
			\draw[] (\wheelxMR,\wheelRyR) -- (\wheelxMR,\wheelLyR);
			\draw[thin, ->, draw = coordinates] (\drawAnglexTB,\drawAngleyTB) arc(-\angleTB-\angleT-\angle:0:\radAngleT) node[right, xshift = 0.05cm, yshift = -0.15cm, color = coordinates] {$\theta_2$};
			\node [] at (\wheelFrontT,0)[circle,fill,inner sep=1.2pt]{};
			\draw[] (0,0) -- (originT);
			\draw[very thin] (0.3,0) arc(0:90:0.3);
			\node [] at (0.12,0.12)[circle,fill,inner sep=0.2pt]{};
			\def\heightNoteB{-1.5}
			\draw[very thin, dashdotted] (0,0) -- ++(0, \heightNoteB);
			\draw[very thin, dashdotted] (\lTrailer,0) -- ++(0, \heightNoteB);
			\draw[very thin, <->] (0,0.8*\heightNoteB) -- ++(\lTrailer, 0) node[midway, above, rotate = \angle+\angleT+\angleTB] {$\ell_2$};
			\end{scope}

	\end{scope}

\end{scope}

\draw[very thin, dashdotted, draw = coordinates] (originTB) -- ++(0.65*\lengthDD,0);

\draw[very thin, dashdotted, draw = coordinates] (originT) -- ++(0.65*\lengthDD,0);

\draw[very thin, dashdotted, draw = coordinates] (originDIANA) -- ++(0.85*\lengthDD,0);

\end{tikzpicture}
		\caption[]%
		{{two-trailer system}}
		\label{fig:sketch_trailer2}
	\end{subfigure}
\caption{Considered non-holonomic vehicles.}
\label{fig:sketch_vehicles}
\end{figure*}
The introduced procedure to suitably approximate driftless, input-affine systems can be deployed for many applications. 
One exemplary case  are robots subject to non-integrable Pfaffian constraints. 
For these, the input matrix~$\bm{G}$ and, thus, the dynamics of system~\eqref{eq:system} directly follow from the kernel of the matrix describing these constraints~\cite{Murray2017}. 
Non-integrable constraints arise, e.g., in wheeled mobile robotics under the commonly used assumption that wheels roll without slipping. 
Hence, following this assumption, this section applies the proposed MPC design methodology for the wheeled robotic systems schematically illustrated in Fig.~\ref{fig:sketch_vehicles}. 
Namely, at first, the widely popular differentially driven mobile robot is considered. 
It is actuated by two independently driven wheels mounted on a common axis, using an additional caster wheel to prevent the robot from tipping over, see Fig.~\ref{fig:sketch_vehicles}(\subref{fig:sketch_unicycle}). 
Due to kinematic equivalence, it is often referred to as a unicycle, which, for brevity, is the term used subsequently. 
The natural extension of the unicycle is the so-called kinematic car or car-like robot~\cite{DeLuca1998}. 
This robot can turn its front axle and is either front-wheel or rear-wheel driven, see Fig.~\ref{fig:sketch_vehicles}(\subref{fig:sketch_kinCar}). However, we consider the front-wheel driven kinematic car in the following. 
In addition, unicycle-trailer combinations are considered with one and two trailers, see Fig.~\ref{fig:sketch_vehicles}(\subref{fig:sketch_trailer2}).
Before deriving tailored non-quadratic cost for these robots, it is worth recalling that quadratic costs are provably insufficient for all of these vehicles, see App.~\ref{app:proof}. 

\subsection{Differentially driven mobile robot (unicycle)}\label{sec:unicycle}
The robot's kinematics can be expressed by means of the model
\begin{align}\label{eq:unicycle}
    \dot{\bm{x}} = \begin{bmatrix}
        \dot{x} \\ \dot{y} \\ \dot{\theta} 
    \end{bmatrix} = \begin{bmatrix}
        \cos \theta & 0 \\ \sin \theta & 0 \\ 0 & 1
    \end{bmatrix} \begin{bmatrix}
        v \\ \omega 
    \end{bmatrix} = \begin{bmatrix}
        \bm{X}_1 (\bm{x}) & \bm{X}_2 (\bm{x})
    \end{bmatrix} \bm{u}, 
\end{align} 
where the state $\bm{x}\in\mathbb{R}^{n_x},~n_x=3$, consists of the spatial position of the point $P$ in the inertial frame of reference~$\mathcal{K}_{\text{I}}$ and the robot's orientation measured relative to the positive $x$-axis, see Fig.~\refeq{fig:sketch_vehicles}(\subref{fig:sketch_unicycle}). Its control input $\bm{u}$ comprises the translational velocity~$v$ and rotational velocity~$\omega$. Naturally, these control inputs, and all of the subsequently considered ones, are constrained, i.e., it holds that $\bm{u}\in\mathcal{U}\subset\mathbb{R}^{n_u},~n_u=2$.
Under the assumption that the robot's wheels roll without slipping, the unicycle~\eqref{eq:unicycle} is subject to the kinematic constraint $\dot{x}\sin \theta -\dot{y}\cos \theta =0$, which enforces that the lateral velocity is zero. 
However, this constraint is non-integrable since the unicycle can be steered to any setpoint in the state space using the implicit control direction following from the Lie bracket $\bm{X}_3 \coloneqq [\bm{X}_1, \, \bm{X}_2]$. This yields the distributions $\bm{\Delta}^1 = \text{span} \{ \bm{X}_1, \, \bm{X}_2 \}$ and $\bm{\Delta}^2 = \text{span} \{ \bm{X}_1, \, \bm{X}_2, \, \bm{X}_3 \}$. 
Consequently, controllability can be shown, e.g., using the LARC and further, the growth vector follows as $\bm{n} = (2, \, 3)$ for an arbitrary setpoint $\bm{d}\in\mathbb{R}^{n_x}$. This in turn yields the weights $\bm{w} = (1,\,1,\,2)$ such that $\{ \bm{X}_1(\bm{d}), \, \bm{X}_2(\bm{d}), \, \bm{X}_3 (\bm{d}) \}$ indeed forms an adapted frame at $\bm{d}$. 
Next, in order to check whether $\bm{x}$ is already a set of privileged coordinates, the non-holonomic orders of the coordinate functions $x_i,~i\in\mathbb{Z}_{1:n_x}$, are derived. 
To this end, the non-holonomic derivatives $\bm{X}_1 x_1 (\bm{d})  = \cos d_3$, $\bm{X}_1 x_2 (\bm{d}) = \sin d_3$, $\bm{X}_2 x_3 (\bm{d}) = 1$, $\bm{X}_2(\bm{X}_1 x_1)(\bm{d})=-\sin d_3$, and $\bm{X}_2(\bm{X}_1 x_2)(\bm{d})=\cos d_3$ are considered. 
Thus the orders read $\text{ord}_{\bm{d}}(x_3) = 1$ for any $\bm{d}\in\mathbb{R}^{n_x}$, $\text{ord}_{\bm{d}} (x_1)= 2$ if $d_3 = (2k-1)\pi/2,~k\in\mathbb{Z}$, and $\text{ord}_{\bm{d}} (x_1)= 1$ otherwise, as well as $\text{ord}_{\bm{d}} (x_2)= 2$ if $d_3 = k\pi,~k\in\mathbb{Z}$, and $\text{ord}_{\bm{d}} (x_2)= 1$ otherwise. 
This implies that reordering $\bm{x}$ suffices for setpoints with $d_3=k\pi/2,~k\in\mathbb{Z}$, to obtain privileged coordinates since the necessary condition $\text{ord}_{\bm{d}} (x_j) = w_j,~j\in\mathbb{Z}_{1:n_x}$, is met. 
However, for general setpoints $\bm{d}\in\mathbb{R}^3$, the condition does not hold and Bella\"{i}che's algorithm is employed to obtain the privileged coordinates.

Utilizing the previously determined adapted frame, the original coordinates $\bm{x}$ are transformed in the first step~\eqref{eq:Trafo1Bellaiche} of Bella\"{i}che's algorithm to
\begin{align}\label{eq:trafo_unicycle}
	\bm{y} 
    &= \begin{bmatrix}
        (x_1-d_1) \cos d_3 + (x_2-d_2) \sin d_3 \\ x_3 - d_3 \\ (x_1-d_1) \sin d_3 - (x_2-d_2) \cos d_3
    \end{bmatrix}.
\end{align} Since it holds that~$w_j < 3$, it immediately follows from~\eqref{eq:Trafo2Bellaiche} that the privileged coordinates read $\bm{z}=\bm{y}$. Illustratively, these coordinates describe the unicycle's position in a rotated frame of reference  having its $x$-axis pointing along the desired orientation $d_3$. Note that the resulting privileged coordinates~\eqref{eq:trafo_unicycle} match the derivation from the mechanical point of view presented in~\cite{RosenfelderEbelEberhard21,RosenfelderEbelEberhard22}. Finally, the homogeneous nilpotent approximation can be obtained by expanding the control vector fields for the system described in the privileged coordinates, i.e.,   
\begin{align*}
    \bm{Z}_1 (\bm{z}) = \begin{bmatrix}
        \cos z_2 & 0 & -\sin z_2
    \end{bmatrix}\tran, \quad \bm{Z}_2 (\bm{z}) = \begin{bmatrix}
        0 & 1 & 0
    \end{bmatrix}\tran.
\end{align*} The corresponding expansions can be written as
\begin{align*}
    \bm{Z}_1 (\bm{z}) ={}& \left( \sum_{k=0}^\infty (-1)^k  \frac{z_2^{2k}}{(2k)!}  \right)\! \frac{\partial}{\partial z_1}  - \left( \sum_{k=0}^\infty \frac{z_2 ^{2k+1}}{(2k+1)!} \right)\! \frac{\partial}{\partial z_3} 
\end{align*} 
and $\bm{Z}_2 (\bm{z}) = \frac{\partial}{\partial z_2}$. 
Then, the weighted degree $w(\bm{\alpha})$ of each monomial $\bm{z}^{\bm{\alpha}} $ is determined which yields the weighted degree $w(\bm{\alpha})-w_j$ of the corresponding monomial vector field $\bm{z}^{\bm{\alpha}}\frac{\partial}{\partial z_j}$. 
Exemplarily, the first monomial $z_2$ of the sine's expansion implies $\bm{\alpha}=(0,\,1,\,0)$ which in turn yields $w(\bm{\alpha}) - w_3 = 1\cdot w_2-w_3 = -1$ for the monomial vector field~$z_2\frac{\partial}{\partial z_3}$, see Sec.~\ref{sec:homog_approx}. 
Grouping together all crucial monomial vector fields, i.e., those with weighted degree $-1$, we obtain $\bm{Z}_1^{[-1]}=[1 \ 0 \ -z_2]\tran$ and $\bm{Z}_2^{[-1]}=[0 \ 1 \ 0]\tran$. 
These control vector fields build the nilpotent approximation~\eqref{eq:system_approx} for the unicycle~\eqref{eq:unicycle} under the transformation $\bm{z}=\bm{y}$, with $\bm{y}$ from~\eqref{eq:trafo_unicycle}, at any point $\bm{d}$. Moreover, this approximation is homogeneous with $\tau=0$, $\bm{s}=(1,\,1)$, and $\bm{r}=(1,\,1,\,2)$. 
Since the resulting homogeneous system is indeed a suitable approximation, see~\cite[Ex.~4.3]{CoronGrueneWorthmann20}, we can employ it to obtain the tailored OCP~\eqref{eq:mpc_optprob_specific} with the stage cost 
\begin{align}\label{eq:cost_unicycle}
    \ell (\bm{x},\bm{u})\! ={}&q_1 \left( (x_1 - d_1) \cos d_3 + (x_2-d_2) \sin d_3 \right)^4 + q_2 x_3^4  \nonumber \\{}&+ q_3 \left( (x_1-d_1) \sin d_3 - (x_2-d_2) \cos d_3  \right)^2 \nonumber\\  {}&+ r_1 v^4 + r_2 \omega^4
\end{align} 
Again, we refer to~\cite{RosenfelderEbelEberhard21,RosenfelderEbelEberhard22} for an alternative derivation from a practical perspective yielding the same result. 
Moreover, note that this universal description degenerates to the unicycles's approximation at the origin $\bm{d}=\bm{0}$ given in~\cite{CoronGrueneWorthmann20}. 
Our previous work~\cite{RosenfelderEbelEberhard21,RosenfelderEbelEberhard22} already contains simulative and experimental results for the unicycle using the cost~\eqref{eq:cost_unicycle}. 
Thus, for brevity, we refrain from reprinting extensive results here and merely summarize that the resulting controller works as expected.
This is exemplarily depicted by means of a forward parking scenario in Fig.~\ref{fig:DIANA_experiment}. 
There, the unicycle shall drive from the origin to the desired setpoint highlighted by a red cross with a goal orientation of $\unit[\pi/4]{rad}$ relative to the horizontal image axis. 
Subsequently, we focus on how the novel aspects from the previous sections enable us to deal with more intricate mobile robots by applying the same constructive procedure as just performed for the unicycle. 

\begin{figure}[t]
	\centering
    \begin{subfigure}[t]{0.6\linewidth}
        \def\SVGwidthC{0.95\textwidth}
		\def\svgwidth{\SVGwidthC}
\begingroup%
  \makeatletter%
  \providecommand\color[2][]{%
    \errmessage{(Inkscape) Color is used for the text in Inkscape, but the package 'color.sty' is not loaded}%
    \renewcommand\color[2][]{}%
  }%
  \providecommand\transparent[1]{%
    \errmessage{(Inkscape) Transparency is used (non-zero) for the text in Inkscape, but the package 'transparent.sty' is not loaded}%
    \renewcommand\transparent[1]{}%
  }%
  \providecommand\rotatebox[2]{#2}%
  \newcommand*\fsize{\dimexpr\f@size pt\relax}%
  \newcommand*\lineheight[1]{\fontsize{\fsize}{#1\fsize}\selectfont}%
  \ifx\svgwidth\undefined%
    \setlength{\unitlength}{435.53791809bp}%
    \ifx\svgscale\undefined%
      \relax%
    \else%
      \setlength{\unitlength}{\unitlength * \real{\svgscale}}%
    \fi%
  \else%
    \setlength{\unitlength}{\svgwidth}%
  \fi%
  \global\let\svgwidth\undefined%
  \global\let\svgscale\undefined%
  \makeatother%
  \begin{picture}(1,0.74973564)%
    \lineheight{1}%
    \setlength\tabcolsep{0pt}%
    \put(0,0){\includegraphics[width=\unitlength,page=1]{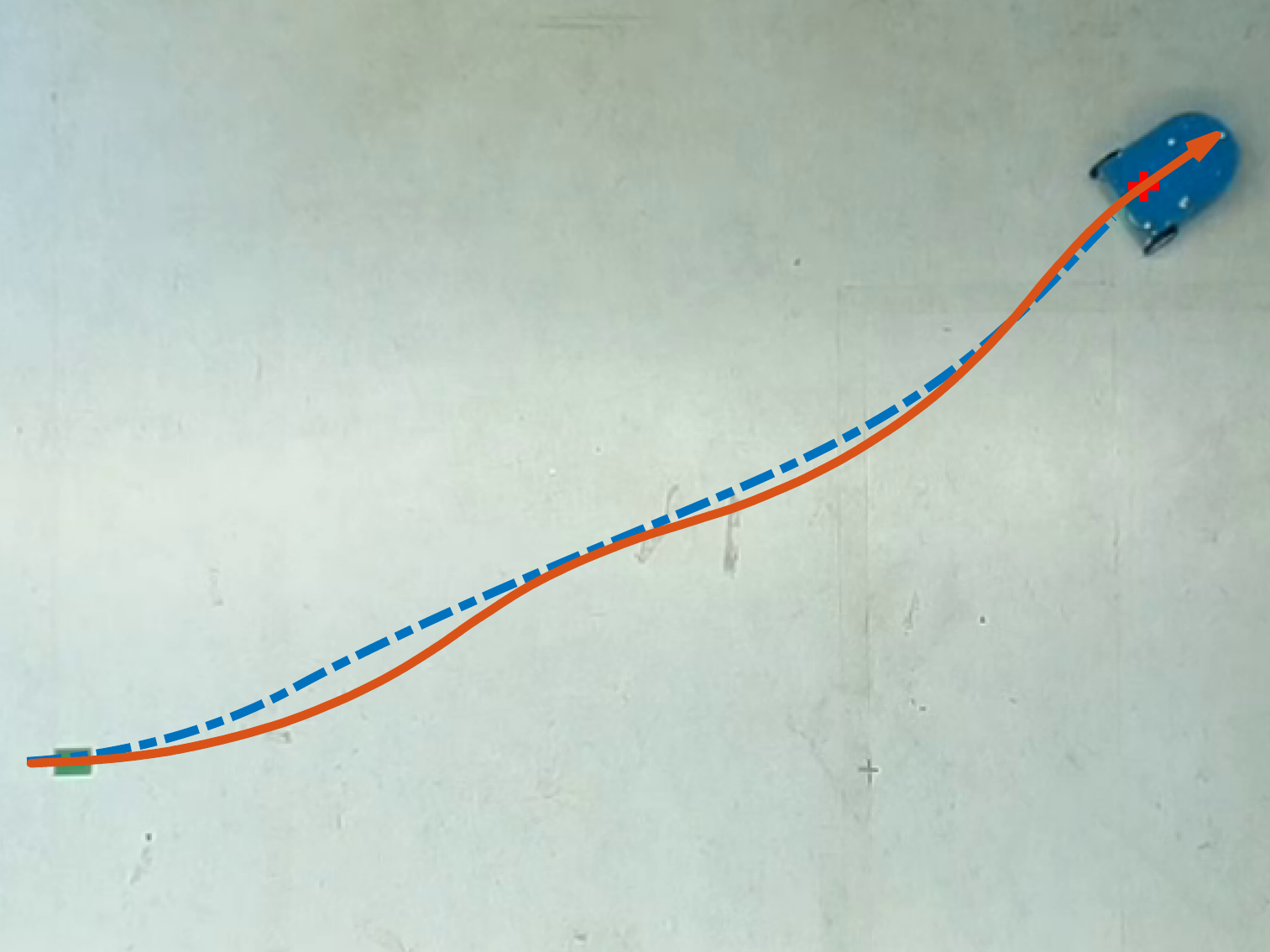}}%
    \put(0.0118159,0.0114315){\makebox(0,0)[lt]{\lineheight{1.25}\smash{\begin{tabular}[t]{l}$t = 6.00\,\textnormal{s}$\end{tabular}}}}%
  \end{picture}%
\endgroup%

    \end{subfigure}
    \caption{Comparison of the experimental (orange line)  and simulative (blue dash-dotted line) result for the unicycle.}
    \label{fig:DIANA_experiment} 
\end{figure}%

\textbf{Remark} For any non-holonomic system of the form~\eqref{eq:system} for which it holds $w_{n_x}<3$, as in this example, the second step~\eqref{eq:Trafo2Bellaiche} of Bella\"{i}che's algorithm can be skipped since then it holds $\bm{z}=\bm{y}$. 
In particular, this is the case for systems with a degree of non-holonomy $r<3$ since~$w_{n_x}=r$. 
This considerably reduces the algorithm's complexity.

\subsection{Kinematic car}
A non-holonomic vehicle that is of particular interest due to its kinematic resemblance to an automobile is the planar so-called kinematic car, or car-like robot. 
Moreover, from a control-theoretic perspective, the kinematic car is more intricate to control than the unicycle since the car's maneuverability is even more restricted. 
As will be seen later, this is also tangible by means of the increased degree of non-holonomy. 
The kinematics of the front-wheel driven kinematic car is described by
\begin{align}\label{eq:kinematic_car}
    \dot{\bm{x}} = \begin{bmatrix}
        \dot{x} \\ \dot{y} \\ \dot{\theta} \\ \dot{\phi}
    \end{bmatrix} = \begin{bmatrix}
        \cos \theta \cos \phi & 0 \\
        \sin \theta \cos \phi & 0 \\
        \frac{1}{\ell} \sin \phi & 0 \\
        0 & 1
    \end{bmatrix} \begin{bmatrix}
        v \\ \omega
    \end{bmatrix} = \begin{bmatrix}
        \bm{X}_1 & \bm{X}_2
    \end{bmatrix} \bm{u},
\end{align} 
where the first three states are the same as for the unicycle and the additional fourth state is the steering angle of the front axis, see also Fig.~\ref{fig:sketch_vehicles}(\subref{fig:sketch_kinCar}). 
The length $\ell>0$ describes the distance between the two axles. 
The car is actuated by means of the front-wheels' translational velocity~$v$ and the angular steering velocity~$\omega$. 
System~\eqref{eq:kinematic_car} is subject to the same kinematic constraint as the unicycle which describes the restricted movement capabilities of the rear axle. 
Analogously, a second non-holonomic constraint $\dot{x} \sin (\theta+\phi) - \dot{y} \cos (\theta + \phi) - \dot{\theta} \ell \cos \phi = 0$ describes the kinematic constraint of the front axle. 
The system's controllability can be shown, e.g., using the LARC with the Lie brackets 
\begin{align}
    \bm{X}_3  &= [\bm{X}_1, \, \bm{X}_2] = \begin{bmatrix}
        \cos \theta \sin \phi & \sin\theta \sin \phi & - \frac{1}{\ell} \cos \phi & 0
    \end{bmatrix}\tran,  \nonumber\\
    \bm{X}_4  &= [\bm{X}_1, \, \bm{X}_3] = \begin{bmatrix}
        -\frac{1}{\ell} \sin \theta & \frac{1}{\ell} \cos \theta & 0 & 0
    \end{bmatrix}\tran.
\end{align}
For more insight regarding the kinematic car, especially the rear-wheel driven case, we refer to~\cite{DeLuca1998}. 
An analogous cost function for the rear-wheel driven car is given in~\cite{RosenfelderEbelEberhard21}. 
However, the derivation presented therein does not follow this paper's constructive and transferable procedure.

Choosing the first two distributions as for the unicycle as well as $\bm{\Delta}^3 = \text{span} \{\bm{X}_1, \, \bm{X}_2, \, \bm{X}_3, \, \bm{X}_4 \}$, the growth vector reads $\bm{n}=(2,\, 3, \, 4)$ for any $\bm{d}\in\mathbb{R}^{n_x},~n_x = 4$. 
Consequently, the weights read $\bm{w} = (1, \, 1,\,2,\,3)$ such that the span of all $\bm{X}_j,~j\in\mathbb{Z}_{1:n_x}$, forms an adapted frame at the setpoint $\bm{d}$. 
We aim to drive the car to the origin and, hence, we set $\bm{d}=\bm{0}$ subsequently. 
This yields the first-order non-holonomic derivatives of the coordinate functions $\bm{X}_1 x_1 (\bm{0}) = 1$, $\bm{X}_1 x_j(\bm{0})=0$ for $j\in\mathbb{Z}_{2:3}$, and $\bm{X}_2 x_4 (\bm{0})=1$. 
Further, it holds that $\bm{X}_2 \bm{X}_1 x_3 (\bm{0})=1/\ell$ and $\bm{X}_2 \bm{X}_1^2 x_2 (\bm{0})=1/\ell$, whereas $\bm{X}_2\bm{X}_1 x_2(\bm{0})=\bm{X}_1^2x_2(\bm{0})=0$. 
Hence, the kinematic car's non-holonomic orders at the origin read $\text{ord}_{\bm{0}} (x_1) = 1$, $\text{ord}_{\bm{0}} (x_2) = 3$, $\text{ord}_{\bm{0}} (x_3) = 2$, and $\text{ord}_{\bm{0}} (x_4) = 1$. 
Thus, reordering the coordinates $\bm{x}$ suffices to obtain a system of privileged coordinates, cf. Sec.~\ref{sec:homog_approx}. However, for the sake of completeness we continue with Bella\"{i}che's algorithm since it will end up in a canonical form~\cite{Bellaiche1993}.

The first step~\eqref{eq:Trafo1Bellaiche} of Bella\"{i}che's algorithm is given as
\begin{align}\label{eq:trafo_car}
	\bm{y} &= \left.\begin{bmatrix} \bm{X}_1  & \bm{X}_2  & \bm{X}_{3} & \bm{X}_{4}  \end{bmatrix}\right|_{\bm{0}}^{-1} {\bm{x}} \nonumber\\ {}&= \begin{bmatrix}
        x_1 & x_4 &-\ell x_3 & \ell x_2
    \end{bmatrix}\tran.
\end{align}
In the second step~\eqref{eq:Trafo2Bellaiche}, the weights $\bm{w}(\bm{0})$ yield the coordinates $\bm{z}$ in terms of $z_j= y_j,~j\in\mathbb{Z}_{1:3}$, and $z_4=y_4-h_{4,2}(y_1,y_2,y_3)$. 
The addend given therein reads
\begin{align*}
    h_{4,2} (y_1,y_2,y_3) = \sum_{\substack{|\bm{\alpha}|=2 \\ w(\bm{\alpha})< w_4}} \bm{Y}_1^{\alpha_1} \left. \bm{Y}_2^{\alpha_2} \bm{Y}_3^{\alpha_3} y_4 \right|_{\bm{0}} \prod_{i=1}^3 \frac{y_i^{\alpha_i}}{\alpha_i !},
\end{align*} where the sum's conditions hold for $\bm{\alpha}=(1,\,1,\,0,\,0)$, $\bm{\alpha}=(2,\,0,\,0,\,0)$, and $\bm{\alpha}=(0,\,2,\,0,\,0)$ due to $w_4 = 3$. However, for setpoints with $d_4 = 0$, all relevant non-holonomic derivatives vanish such that it holds that $\bm{z}=\bm{y}$. Hence, the control vector fields expressed in privileged coordinates at the origin are
\begin{align}
    \bm{Z}_1 (\bm{z})\! &= \begin{bmatrix}
        \cos z_2 \cos \left( \frac{z_3}{\ell} \right) & 0 & -\sin z_2 & -\ell \cos z_2 \sin \left( \frac{z_3}{\ell} \right)
    \end{bmatrix}\tran, \nonumber \\ \bm{Z}_2 (\bm{z})\! &= \begin{bmatrix}
        0 & 1 & 0 & 0
    \end{bmatrix}\tran.
\end{align} The corresponding Taylor expansions are omitted for the sake of brevity. However, grouping all monomial vector fields of the expansion with relative degree $-1$, we obtain the homogeneous nilpotent approximation at the origin in terms of
\begin{align}\label{eq:approx_car}
    \dot{\bm{z}} = \bm{Z}_1^{[-1]} u_1 + \bm{Z}_2^{[-1]} u_2 = \begin{bmatrix}
        1 \\ 0 \\ -z_2 \\ - z_3
    \end{bmatrix} u_1 + \begin{bmatrix}
        0\\1\\0\\0
    \end{bmatrix} u_2,
\end{align} 
which is homogeneous with $\tau=0$, $\bm{s}=(1,\,1)$, and $\bm{r}=(1,\,1,\,2,\,3)$ and indeed approximates the kinematic car~\eqref{eq:kinematic_car} in the sense of~\cite[Def.\ 4.1]{CoronGrueneWorthmann20}. 
The very technical calculations to show the latter are skipped here for brevity. 
As previously mentioned, although the privileged coordinates at the origin can be obtained by reordering~$\bm{x}$, it nevertheless makes sense to apply Bella\"{i}che's algorithm since it yields a canonical form by taking into account the length~$\ell$, see Transformation~\eqref{eq:trafo_car}, potentially making easier controller tuning for different values of~$\ell$. 
Moreover, although not done here for reasons of brevity, choosing a setpoint with~$d_4\neq 0$ makes necessary the application of the full procedure since then~$\bm{z}\neq\bm{y}$. 
With the origin as setpoint, however, the tailored stage cost~\eqref{eq:MPC_cost} for the kinematic car is given in terms of
\begin{align} \label{eq:cost_car}
    \ell (\bm{x},\bm{u}) ={}&q_1 x_1^{12} + q_2 x_4^{12}  + q_3 (\ell x_3)^{6} + q_4 (\ell x_2)^4\nonumber\\ {}&+  r_1 v^{12} + r_2 \omega^{12}.
\end{align}
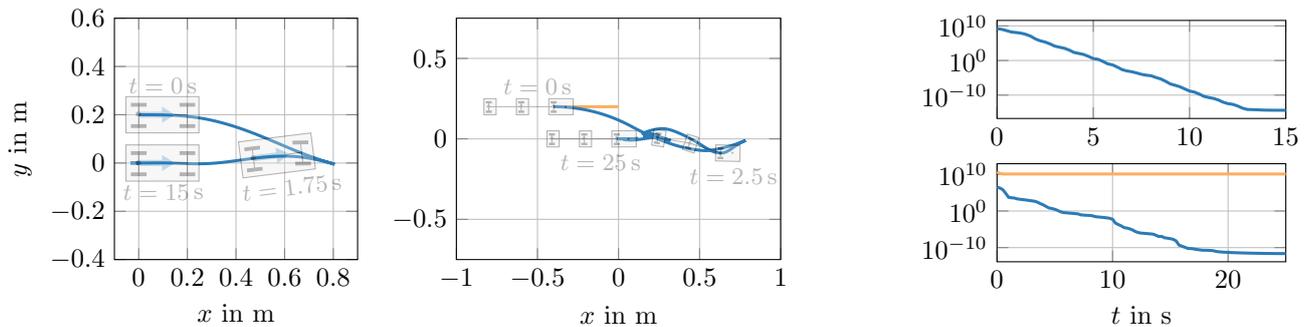
\begin{figure*}[t]
	\centering
        \begin{subfigure}[t]{0.64\linewidth}
		\centering
%
%
%

\begin{tikzpicture}

\begin{axis}[%
name=kinCarPlane,
width=3.2cm,
height=3.2cm,
at={(1cm,0)},
scale only axis,
xmin=-.1,
xmax=0.9,
ymin=-.4,
ymax=.6,
axis background/.style={fill=white},
xmajorgrids,
ymajorgrids,
xlabel={$x$ in $\unit{m}$},
ylabel={$y$ in $\unit{m}$}
] 

\addplot[color = colorPlot, line width = \linewidthPlots] table{src/data/KC_plane.txt}; %

\def\lengthDDc{0.2}
\def\heightDDc{0.15}
\readlist\KinCarX{0, 0.4715, 0.0001} 
\readlist\KinCarY{0.2, 0.0194, 0} 
\readlist\KinCarT{0, 7.1362, 0} 
\readlist\KinCarP{0, -6.2994, 0} 

\pic {KinCar={\KinCarX[1]}{\KinCarY[1]}{\KinCarT[1]}{\lengthDDc}{\heightDDc}{\KinCarP[1]}{fill=grayCar}{0.3}};
\node [above, shift={(axis direction cs:0.1,0.05)}, text opacity=0.3] at (axis cs:\KinCarX[1],\KinCarY[1]) {\small{$t=\unit[0]{s}$}};

\pic {KinCar={\KinCarX[2]}{\KinCarY[2]}{\KinCarT[2]}{\lengthDDc}{\heightDDc}{\KinCarP[2]}{fill=grayCar}{0.3}};
\node [below, shift={(axis direction cs:0.15,-0.05)}, rotate=7.1362, text opacity=0.3] at (axis cs:\KinCarX[2],\KinCarY[2]) {\small{$t=\unit[1.75]{s}$}};

\pic {KinCar={\KinCarX[3]}{\KinCarY[3]}{\KinCarT[3]}{\lengthDDc}{\heightDDc}{\KinCarP[3]}{fill=grayCar}{0.3}};
\node [below, shift={(axis direction cs:0.1,-0.05)}, rotate=0.6245, text opacity=0.3] at (axis cs:\KinCarX[3],\KinCarY[3]) {\small{$t=\unit[15]{s}$}};


\end{axis}

\begin{axis}[%
    width=6.4cm,
    height=3.2cm,
    at={(5.5cm, 0)},
    scale only axis,
    xmin=-1,
    xmax=1,
    ymin=-0.75,
    ymax=0.75,
    axis equal image,
    axis background/.style={fill=white},
    xmajorgrids,
    ymajorgrids,
    xlabel={$x$ in $\unit{m}$},
    ] 
    
    \def\distTr{0.2}        
    \def\lengthDDt{0.15}
    \def\heightDDt{0.1}
    \readlist\TrX{-0.4, 0.6291, -0.0113} 
    \readlist\TrY{0.2, -0.0876, -0.0001} 
    \readlist\TrTA{0, -1.2726, 0.2459} 
    \readlist\TrTB{0, -17.7822, -0.0293} 
    \readlist\TrTC{0, -9.4929, -0.0015} 
    
    \addplot[color = colorPlot2, line width = \linewidthPlots] table{src/data/T2_quadratic_plane.txt}; \label{plot:T2_quadratic}%
    
    \addplot[color = colorPlot, line width = \linewidthPlots] table{src/data/T2_plane.txt}; \label{plot:T2_tailored}%
    
    \pic {Trailer={\TrX[1]}{\TrY[1]}{\TrTA[1]}{\TrTB[1]}{\TrTC[1]}{\lengthDDt}{\heightDDt}{\distTr}{0.3}};
    \node [above, shift={(axis direction cs:-0.1,0.02)}, text opacity=0.3] at (axis cs:\TrX[1],\TrY[1]) {\small{$t=\unit[0]{s}$}};
    
    \pic {Trailer={\TrX[2]}{\TrY[2]}{\TrTA[2]}{\TrTB[2]}{\TrTC[2]}{\lengthDDt}{\heightDDt}{\distTr}{0.3}};
    \node [below, shift={(axis direction cs:0.08,-0.04)}, text opacity=0.3] at (axis cs:\TrX[2],\TrY[2]) {\small{$t=\unit[2.5]{s}$}};
    
    \pic {Trailer={\TrX[3]}{\TrY[3]}{\TrTA[3]}{\TrTB[3]}{\TrTC[3]}{\lengthDDt}{\heightDDt}{\distTr}{0.3}};
    \node [below, shift={(axis direction cs:-0.1,-0.04)}, text opacity=0.3] at (axis cs:\TrX[3],\TrY[3]) {\small{$t=\unit[25]{s}$}};
    
\end{axis}
\end{tikzpicture}%
		\caption[]%
		{{parallel parking of the kinematic car (left) and forward parking of the two-trailer vehicle (right)}}
		\label{fig:sim_plane}
	\end{subfigure}
	\hfill
    \begin{subfigure}[t]{0.33\linewidth}
		\centering
%
%
%
\begin{tikzpicture}

\begin{axis}[%
width=3.8cm,
height=1.3cm,
at={(0cm,1.9cm)},
scale only axis,
xmin=0,
xmax=15,
axis background/.style={fill=white},
xmajorgrids,
ymajorgrids,
ymode=log
] 

\addplot[color = colorPlot, line width = \linewidthPlots] table{src/data/KC_val_fct.txt}; %

\end{axis}

\begin{axis}[%
    width=3.8cm,
    height=1.3cm,
    at={(0cm, 0cm)},
    scale only axis,
    xmin=0,
    xmax=25,
    axis background/.style={fill=white},
    xmajorgrids,
    ymajorgrids,
    xlabel={$t$ in $\unit{s}$},
    ymode=log
    ] 
    
    \addplot[color = colorPlot, line width = \linewidthPlots] table{src/data/T2_val_fct.txt}; %

    \addplot[color = colorPlot2, line width = \linewidthPlots] table{src/data/T2_quadratic_val_fct.txt}; %
    
\end{axis}

\end{tikzpicture}%
		\caption[]%
		{{value function of the kinematic car (top) and the two-trailer system (bottom)}}
		\label{fig:sim_val_fct}
	\end{subfigure}
    \caption{Simulations of the kinematic car and the two-trailer system utilizing the proposed non-quadratic costs.}
    \label{fig:simulations}
\end{figure*}%
In order to investigate the kinematic car's closed loop by means of simulations, the resulting OCP is formulated using CasADi~\cite{AnderssonEtAl19} through the Matlab interface. Note that this setup is utilized for all subsequent simulations and hardware experiments. 
The car's control bounds are set to $\mathcal{U} = [-1, \, 1]\ \unit{m/s} \times [-1, \,1 ]\ \unit{rad/s}$ and the distance between the two axles is chosen to $\ell=\unit[0.2]{m}$. 
Here and in the following, the controllers actually rely on the solution of a time-discretized version of the OCP~\eqref{eq:mpc_optprob_specific}, assuming a zero-order hold on the control input, a sampling time of $\delta_\mathrm{t}=\unit[0.25]{s}$, and a prediction horizon spanning $H=60$ discrete time steps. 
The reason to use a discrete-time controller of this type is that the robotic hardware employed in later experiments operates in such a fashion, with piecewise constant control inputs. 
For technical reasons, the OCP's cost function is scaled to obtain a more accurate solution. 
All OCPs in this paper are solved with IPOPT~\cite{Waechter2006}.
Since parking the car in parallel is particularly challenging, we consider a scenario where the kinematic car shall drive from its initial condition $\bm{x}_0=\begin{bmatrix} \unit[0]{m} & \unit[0.2]{m} & \unit[0]{rad} & \unit[0]{rad} \end{bmatrix}\tran$ to the origin, see also the quadratic cost's failure for this case in App.~\ref{app:proof}. 
The resulting  closed-loop trajectory in the plane and the corresponding value function are depicted in Figs.~\ref{fig:simulations}(\subref{fig:sim_plane}) and~\ref{fig:simulations}(\subref{fig:sim_val_fct}), respectively. 
Note that in Fig.~\ref{fig:simulations}(\subref{fig:sim_plane}), as well as later in Fig.~\ref{fig:experiments}, the spatial position of the (towing) vehicle's point $P$ in the plane is depicted over the course of time. 
Further, in Fig.~\ref{fig:simulations}(\subref{fig:sim_plane}), the vehicle's pose is depicted at selected time instants with reduced opacity in order to indicate its driving direction and configuration. 
It is evident for the kinematic car that, due to the higher exponent belonging to the deviation in the $x$-direction, the tailored cost function~\eqref{eq:cost_car} enables the car to maneuver far enough in order to approach the origin along the well-controllable direction~$\bm{X}_1$, see Fig.~\ref{fig:simulations}(\subref{fig:sim_plane}). 
Crucially, the remaining deviation after $\unit[15]{s}$ in the hardly-controllable directions $y$ and $\theta$ are less than $\unit[10^{-10}]{mm}$ and $\unit[ 10^{-4}]{^{\circ}}$, respectively. 
In contrast, when using standard quadratic costs but otherwise the same parameters and initial condition, the robot does not even begin to drive or converge to the setpoint but stays stationary in its initial pose. 

In view of the unicycle~\eqref{eq:unicycle} and the kinematic car~\eqref{eq:kinematic_car}, one could conjecture that a large class of non-holonomic vehicles is already expressed in privileged coordinates at the origin. However, this does not hold in general, as will be exemplified below.

\subsection{Differential drive with an attached trailer}
At first glance, attaching a trailer to a unicycle yields a system that is kinematically related to the kinematic car~\eqref{eq:kinematic_car}. 
However, differences arise when deriving a tailored~OCP. 
Using the state vector~$\bm{x}\tran = \begin{bmatrix}
    x & y & \theta & \theta_1
\end{bmatrix}$, the tandem's kinematics is given by
\begin{align}\label{eq:1trailer}
    \dot{\bm{x}} 
    = \begin{bmatrix}
        \cos \theta & 0 \\
        \sin \theta & 0 \\
        0 & 1\\
        \frac{1}{\ell_1} \sin (\theta-\theta_1) & 0
    \end{bmatrix} \begin{bmatrix}
        v \\ \omega
    \end{bmatrix} = \begin{bmatrix}
        \bm{X}_1 & \bm{X}_2
    \end{bmatrix} \bm{u},
\end{align} 
where the first three states describe the towing vehicle's pose analogously to Sec.~\ref{sec:unicycle}, and the fourth state describes the trailer's absolute orientation in the plane w.r.t. the $x$-axis of the inertial frame, see Fig.~\ref{fig:sketch_vehicles}. 
Since the overall vehicle is only actuated by the towing unicycle, the inputs are equivalent to those given in Sec.~\ref{sec:unicycle}. Kinematically, a second non-holonomic constraint arises restricting the velocity direction of the trailer. Nonetheless, system~\eqref{eq:1trailer} is controllable which can be shown by the LARC, based on the Lie brackets
\begin{align}
    \bm{X}_3  &= [\bm{X}_1, \, \bm{X}_2] = \begin{bmatrix}
        \sin \theta & -\cos \theta & 0 & - \frac{1}{\ell_1}\cos (\theta - \theta_1)
    \end{bmatrix}\tran,  \nonumber\\
    \bm{X}_4  &= [\bm{X}_1, \, \bm{X}_3] = \begin{bmatrix}
        0&0&0& -\frac{1}{\ell_1^2}
    \end{bmatrix}\tran.
\end{align} 
Then, the unicycle's distributions $\bm{\Delta}^1$ and $\bm{\Delta}^2$ as well as $\bm{\Delta}^3 = \text{span} \{\bm{X}_1, \, \bm{X}_2, \, \bm{X}_3, \, \bm{X}_4 \}=T_{\bm{0}}\mathbb{R}^4$ yield the growth vector $\bm{n}=(2,\,3,\,4)$, the weights $\bm{w}=(1,\,1,\,2,\,3)$, and, consequently, the degree of non-holonomy~$r=3$. 
Since, in the following, it is the goal to steer the system to its origin, i.e., $\bm{d}=\bm{0}$, the first-order non-holonomic derivatives of the coordinate functions at the origin are $\bm{X}_1 x_1 (\bm{0})=1$, $\bm{X}_1 x_j (\bm{0})=0,~j\in\mathbb{Z}_{2:4}$, $\bm{X}_2 x_3 (\bm{0})=1$, and $\bm{X}_2 x_k (\bm{0})=0,~k\in\mathbb{Z}_{1:4}\setminus\{ 3\}$. 
Further, non-vanishing second-order derivatives are $\bm{X}_2 \bm{X}_1 x_2 (\bm{0})=1$ and $\bm{X}_2 \bm{X}_1 x_4 (\bm{0}) = 1/\ell_1$ such that the system's non-holonomic orders at the origin follow as $\text{ord}_{\bm{0}}(x_1)=\text{ord}_{\bm{0}}(x_3)=1$ and $\text{ord}_{\bm{0}}(x_2)=\text{ord}_{\bm{0}}(x_4)=2$. 
Crucially, reordering the coordinates is not sufficient to ensure $\text{ord}_{\bm{0}}(x_j)=w_j$, implying that $\bm{x}$ is not a set of privileged coordinates. 
Note that this is in strong contrast to the unicycle and the kinematic car. In order to obtain privileged coordinates we deploy Bella\"{i}che's algorithm. From~\eqref{eq:Trafo1Bellaiche}, it follows at the origin that
\begin{align}
    \bm{y} = \begin{bmatrix}
        x_1 & x_3 & -x_2 & \ell_1 (x_2 - \ell_1 x_4)
        \label{eq:trafo_1trailer}
    \end{bmatrix}\tran
\end{align} and moreover, for the resulting vector fields $\bm{Y}_j,~j\in\mathbb{Z}_{1:3}$, the second step~\eqref{eq:Trafo2Bellaiche} vanishes at the origin such that it holds for the privileged coordinates that $\bm{z}=\bm{y}$ at $\bm{d}=\bm{0}$. 
For more general setpoints~$\bm{d}=\bm{0}$, which yields $\bm{z}\neq\bm{y}$, we refer to~\cite{Bellaiche1993}. 
For $\bm{d}=\bm{0}$, however, the control vector fields read
\begin{align}
    \bm{Z}_1 \! &= \! \begin{bmatrix}
        \cos z_2  & 0 & -\sin z_2 & \ell \sin z_2 - \ell \sin \left( z_2  + \frac{z_3}{\ell} + \frac{z_4}{\ell^2} \right)
    \end{bmatrix}\tran, \nonumber \\ 
    \bm{Z}_2 \! &= \! \begin{bmatrix}
        0 & 1 & 0 & 0
    \end{bmatrix}\tran,
\end{align} where we again omit listing their corresponding Taylor expansion. The resulting homogeneous nilpotent approximation follows from the monomial vector fields with weighted degree $-1$ and is the same as for the kinematic car~\eqref{eq:approx_car}. 
Again, the boundedness of the residuum of the difference of this approximation and the original kinematics~\eqref{eq:1trailer} can be shown as necessitated in~\cite{CoronGrueneWorthmann20}. This technical computation is skipped for the sake of brevity. 
Thus, the tailored stage cost for the unicycle with an attached trailer follows from the homogeneity of~\eqref{eq:approx_car} and the transformation of the tandem into the privileged coordinates $\bm{z}$, yielding
\begin{align}\label{eq:cost_1trailer}
    \ell (\bm{x},\bm{u}) ={}&q_1 x_1^{12} + q_2 x_3^{12} \nonumber + q_3 (x_2)^{6} + q_4(\ell_1 (x_2 - \ell_1 x_4) )^4 \\&+  r_1 v^{12} + r_2 \omega^{12}.
\end{align}
It is worth noting that the (scaled) fourth privileged coordinate $z_4/\ell_1 = x_2 - \ell_1 x_4$ is the small-angle approximation of the deviation of the trailer's axle center in the $y$-direction, i.e., in the direction hardest to control. 
This intrinsic property is taken into account in the OPC due to the smaller exponent, which implies that the controller prioritizes the minimization of this error close to the origin. 
Consequently, the MPC first aims to steer the center point of the trailer's axle to the $x$-axis. 
In the next phase, it is the goal to reduce the deviation of the towing unicycle in the $y$-direction and, finally, the origin is approached along the well controllable $x$-axis. 
This characteristic closed-loop behavior can also be observed in hardware experiments.

\begin{figure}[t]
	\centering
    \begin{subfigure}[t]{0.49\linewidth}
		\centering
		\includegraphics[width=0.95\textwidth]{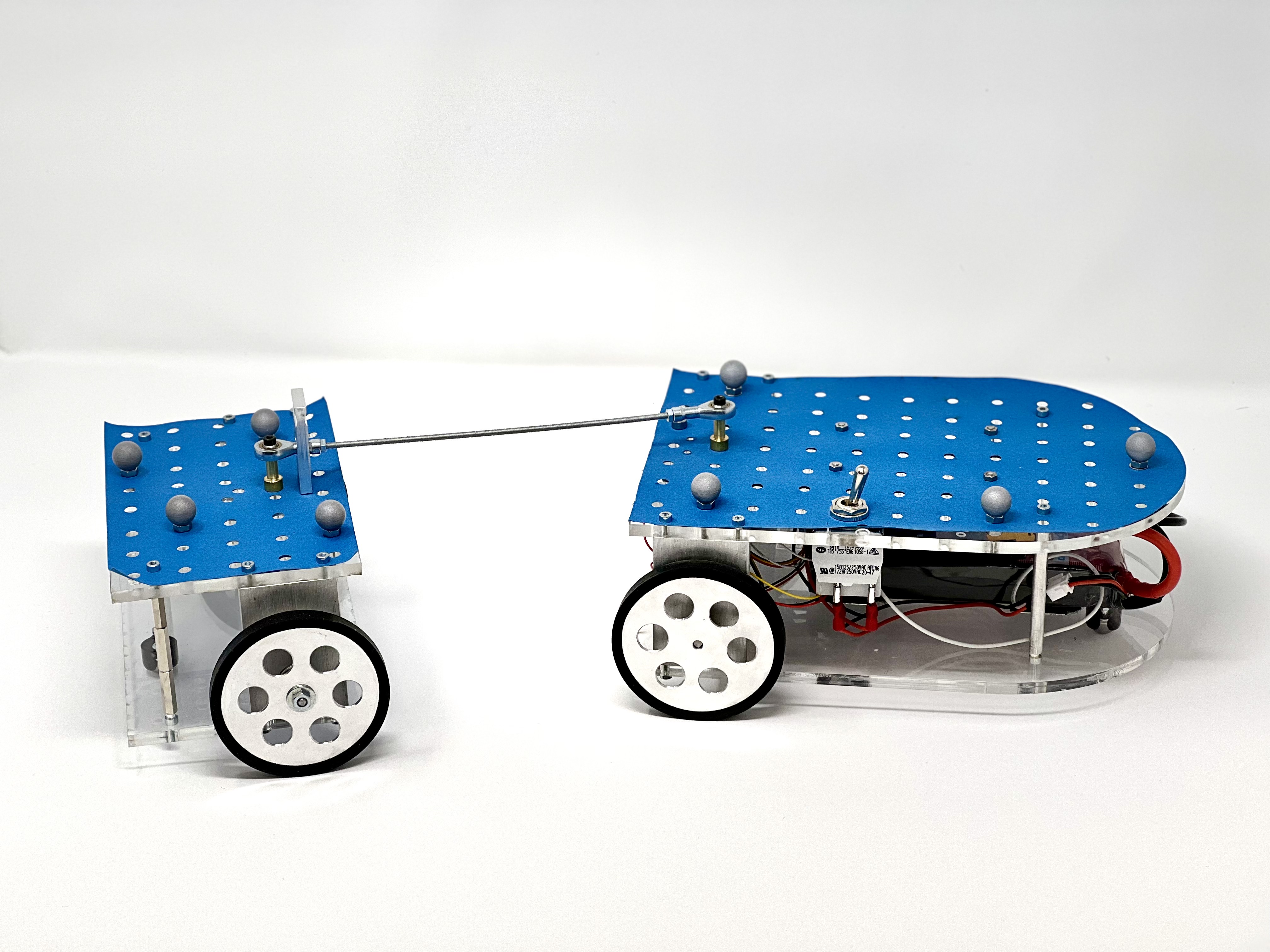}
	\end{subfigure}
	\hfill
    \begin{subfigure}[t]{0.49\linewidth} 
		\centering
        \def\clipR{0}
        \def\clipL{0}
        \def\clipB{0}
        \def\clipT{0}
        \def\SVGwidthC{0.95\textwidth}
		\def\svgwidth{\SVGwidthC}
\begingroup%
  \makeatletter%
  \providecommand\color[2][]{%
    \errmessage{(Inkscape) Color is used for the text in Inkscape, but the package 'color.sty' is not loaded}%
    \renewcommand\color[2][]{}%
  }%
  \providecommand\transparent[1]{%
    \errmessage{(Inkscape) Transparency is used (non-zero) for the text in Inkscape, but the package 'transparent.sty' is not loaded}%
    \renewcommand\transparent[1]{}%
  }%
  \providecommand\rotatebox[2]{#2}%
  \newcommand*\fsize{\dimexpr\f@size pt\relax}%
  \newcommand*\lineheight[1]{\fontsize{\fsize}{#1\fsize}\selectfont}%
  \ifx\svgwidth\undefined%
    \setlength{\unitlength}{225.39095306bp}%
    \ifx\svgscale\undefined%
      \relax%
    \else%
      \setlength{\unitlength}{\unitlength * \real{\svgscale}}%
    \fi%
  \else%
    \setlength{\unitlength}{\svgwidth}%
  \fi%
  \global\let\svgwidth\undefined%
  \global\let\svgscale\undefined%
  \makeatother%
  \begin{picture}(1,0.74926796)%
    \lineheight{1}%
    \setlength\tabcolsep{0pt}%
    \put(0,0){\includegraphics[width=\unitlength,page=1]{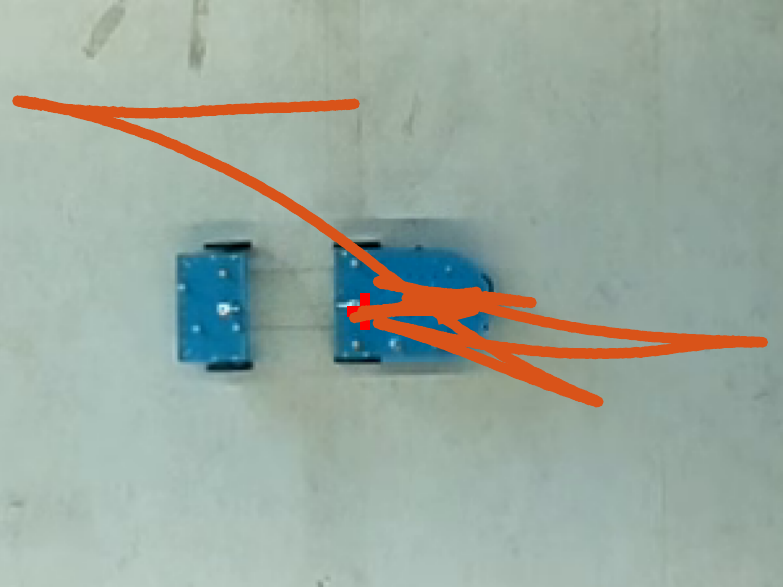}}%
    \put(0.0118159,0.0114315){\makebox(0,0)[lt]{\lineheight{1.25}\smash{\begin{tabular}[t]{l}$t = 19.00\,\textnormal{s}$\end{tabular}}}}%
  \end{picture}%
\endgroup%

	\end{subfigure}
    \caption{Photograph of the experimental one-trailer system (left) and corresponding parallel parking scenario (right).}
    \label{fig:experiments}
\end{figure}%

The same setup as previously described for the kinematic car is utilized, but the unicycle's control bounds are tightened to $\mathcal{U} = [-0.4, \, 0.4]\ \unit{m/s} \times [-\pi/8, \, \pi/8]\ \unit{rad/s}$ due to the employed electric motors. 
While the corresponding MPC runs at a sampling time of $\unit[250]{ms}$ on a standard laptop, to govern the intended velocity following from the OCP~\eqref{eq:mpc_optprob_specific}, a nested PID controller is running at a frequency of $\unit[100]{Hz}$ on a BeagleBone Blue board~\cite{BeagleBoard.orgFoundation22}. 
The communication between the laptop and the on-board computer is realized in the form of multicast messages using the LCM communication library~\cite{HuangOlsonMoore10}. 
Further, the vehicle's pose and configuration is provided by six Optitrack Prime 13W ultra-wide angle cameras mounted in the laboratory environment.

In the considered scenario, the one-trailer vehicle with $\ell_1\coloneqq\unit[0.19]{m}$, see Fig.~\ref{fig:experiments} (left), shall perform a parallel parking maneuver starting from the initial pose
$\bm{x}(0)=\begin{bmatrix} \unit[0]{m} & \unit[0.3]{m} & \unit[0]{rad} & \unit[0]{rad} \end{bmatrix}\tran$. 
In the hardware experiment, initial placement cannot be exact, but deviations have absolute values smaller than $\unit[3]{mm}$ and $\unit[0.035]{rad}$ in the positions and orientations, respectively. 
The goal is to approach the origin until a prescribed, desired accuracy is achieved. 
Crucially, the proposed non-quadratic controller succeeds by significantly reversing the vehicle in order to be able to compensate the deviations in the challenging control directions, i.e., those following from nested control actions and with weights greater than one.
For the scenario depicted in Fig.~\ref{fig:experiments}, the controller is terminated after about $\unit[20]{s}$ with an absolute deviation to the origin of less than~$\begin{bmatrix} \unit[0.2]{mm} & \unit[0.5]{mm} & 2^{\circ} & 0.5^{\circ}  \end{bmatrix}\tran$. 
Although this accuracy may not always be achieved in each attempt after the aforementioned time span, it is crucial to note that the proposed non-quadratic cost yields a controller which continuously seeks to minimize the cost function as long as the vehicle is not exactly parked at the origin.
In contrast to this, the MPC controller using the standard quadratic cost does not attempt to reduce the distance to the origin at all times but rather stays practically stationary at a point that is not the origin.
For the considered scenario, the controller yields a significant deviation in the fourth state of $34^{\circ}$.
Once again, the particular quantitative result might be strongly influenced by the chosen weights and prediction horizon. 
Still, the previous scenario stresses the theoretically proven structural problem of the quadratic controller penalizing the deviation w.r.t.\ Euclidean geometry.
Whilst in conventional holonomic mechanical systems, the steady-state offset usually is accounted for by plant-model mismatches, disturbances, or the actuators' sensitivity, in the present case, it is an intrinsic problem of the quadratic costs regarding the non-holonomic vehicles. 
Notice that videos of the scenarios discussed previously can be found in the digital supplementary material provided with this article.

Comparing the obtained stage cost~\eqref{eq:cost_1trailer} with the unicycle's cost~\eqref{eq:cost_unicycle} indicates that the order of the obtained stage cost scales with the kinematic complexity of the concerned system~\eqref{eq:system}, primarily w.r.t. its degree of non-holonomy. This is elucidated by adding another trailer to the vehicle.

\subsection{Differential drive with two attached trailers}
The kinematics of the system is an extension of the kinematics of the one-trailer system~\eqref{eq:1trailer} with an additional fifth state describing the absolute orientation of the second trailer w.r.t. the positive $x$-axis, see Fig.~\ref{fig:sketch_vehicles}(\subref{fig:sketch_trailer2}). The additional kinematics can be described by means of $\dot{\theta}_2 = \dot{x}_5 = \cos(\theta -\theta_1)\sin(\theta_1 - \theta_2)/\ell_2$ such that the control vector fields are
\begin{align}
    \bm{X}_1 &= \begin{bmatrix}
        \cos \theta & \sin \theta & 0 & \frac{\sin (\theta - \theta_1)}{\ell_1} & \frac{\cos(\theta-\theta_1) \sin(\theta_1-\theta_2)}{\ell_2}
    \end{bmatrix}\tran, \nonumber\\ 
    \bm{X}_2 &= \begin{bmatrix}
        0 & 0 & 1 & 0 & 0
    \end{bmatrix}\tran. \label{eq:2trailer}
\end{align} 
For more details regarding the kinematics of this system or the general $N$ trailer case, the interested reader is referred to~\cite{Sordalen1993}. 
Besides the two non-holonomic constraints arising for the one-trailer system, the second trailer is subject to an analogous non-holonomic constraint. However, using the Lie brackets $\bm{X}_3 = [\bm{X}_1, \, \bm{X}_2]$, $\bm{X}_4 = [\bm{X}_1, \, \bm{X}_3]$, and $\bm{X}_5 = [\bm{X}_1, \, \bm{X}_4]$, the system's controllability can be shown, e.g., via the LARC. 
Then, choosing the distributions $\bm{\Delta}^j,~j\in\mathbb{Z}_{1:4}$, in the same fashion as before, the growth vector $\bm{n}=(2,\,3,\,4,\,5)$ and the weights $\bm{w}=(1,\,1,\,2,\,3,\,4)$ are obtained. Again, it is the goal to asymptotically stabilize the origin, i.e., it holds subsequently that $\bm{d}=\bm{0}$, and the corresponding non-vanishing first-order non-holonomic derivatives read $\bm{X}_1 x_1 (\bm{0})=1$ and $\bm{X}_2 x_3 (\bm{0})=1$. Further relevant higher-order derivatives at the origin are $\bm{X}_2 \bm{X}_1 x_2 (\bm{0})=1$, $\bm{X}_2 \bm{X}_1 x_4 (\bm{0})=1/\ell_1$, and $\bm{X}_2 \bm{X}_1^2 x_5 (\bm{0})=1/(\ell_1 \ell_2)$ such that it holds that $\text{ord}_{\bm{0}}(x_j)=1,~j\in\{1,3\}$, $\text{ord}_{\bm{0}}(x_k)=2,~k\in\{2,4\}$, and $\text{ord}_{\bm{0}}(x_5)=3$.
Evidently, as for the single-trailer system, reordering $\bm{x}$ is not sufficient to obtain a set of privileged coordinates and thus, we employ Bella\"{i}che's algorithm. 
Concerning the origin, applying the first transformation step~\eqref{eq:Trafo1Bellaiche} yields $y_1 = x_1$, $y_2 = x_3$, $y_3 = -x_2$, $y_4 = x_2(\ell_1 + \ell_2) - x_4 \ell_1 (\ell_1+\ell_2) - x_5 \ell_2^2$, and $y_5 = -x_2 \ell_1 \ell_2 + x_4 \ell_1^2 \ell_2 + x_5 \ell_1 \ell_2^2$. 
Once again, the second step~\eqref{eq:Trafo2Bellaiche} does not imply any changes at the origin, i.e., it holds that $\bm{z}=\bm{y}$. 
This example makes clear that an increasing degree of non-holonomy renders the derivation of the privileged coordinates, and thus, of the resulting MPC cost, increasingly more sophisticated. However, since the proposed procedure is constructive, the crucial ingredients can still be derived in a straightforward fashion, with merely the calculations and obtained expressions becoming lengthier. 

Therefore, the system's kinematics can be expressed in terms of the privileged coordinates~$\bm{z}$ and in the next step, the corresponding Taylor expansion can be determined. Then, grouping all monomial vector fields of the expansion with weighted degree $-1$, the homogeneous nilpotent approximation for the unicycle with two attached trailers reads
\begin{align}\label{eq:approx_2trailer}
    \dot{\bm{z}} = \bm{Z}_1^{[-1]} u_1 + \bm{Z}_2^{[-1]} u_2 = \begin{bmatrix}
        1 \\ 0 \\ -z_2 \\ - z_3 \\ -z_4
    \end{bmatrix} u_1 + \begin{bmatrix}
        0\\1\\0\\0\\ 0
    \end{bmatrix} u_2
\end{align}
which is homogeneous with $\tau=0$, $\bm{s}=(1,\,1)$, and $\bm{r}=(1,\,1,\,2,\,3, \, 4)$ and approximates the original system with the residuum fulfilling the conditions given in~\cite[Def.\ 4.1]{CoronGrueneWorthmann20}, although, again, we refrain from printing the residuum calculations for brevity. 
Consequently, the tailored MPC stage cost for the system approximated at the origin follows from~\eqref{eq:MPC_cost} and the state transformation, yielding
\begin{align}
    \ell (\bm{x},\bm{u}) ={}&q_1 x_1^{48} + q_2 x_3^{48}  + q_3 x_2^{24} \nonumber \\
    &+ ( x_2(\ell_1 + \ell_2) - x_4 \ell_1 (\ell_1+\ell_2) - x_5 \ell_2^2)^{16} \nonumber \\
    &+ (-x_2 \ell_1 \ell_2 + x_4 \ell_1^2 \ell_2 + x_5 \ell_1 \ell_2^2)^{12} \nonumber \\
    &+  r_1 v^{48} + r_2 \omega ^{48}.  \label{eq:cost_2trailer} 
\end{align}
As previously mentioned, the obtained cost function becomes more intricate the more restricted the vehicle's movement capabilities are, which is reflected by the degree of non-holonomy $r = w_{n_r}$. 
Hence, numerically solving the OCP becomes more challenging.
However, the non-quadratic stage cost~\eqref{eq:cost_2trailer} implicitly takes into account the non-holonomic nature of the vehicle such that its origin is locally asymptotically stabilized. 
This can also be observed by means of the following simulation scenario. 
Therein, the vehicle with $\ell_1=\ell_2=\unit[0.2]{m}$ shall drive from its initial pose $\bm{x}_0=\begin{bmatrix} \unit[-0.4]{m} &\unit[0.2]{m} &\unit[0]{rad}&\unit[0]{rad}&\unit[0]{rad}\end{bmatrix}\tran$ to the origin. 
Due to the more intricate kinematics of the two-trailer system, the prediction horizon is increased to $H=80$ while the other aforementioned parameters read the same as for the kinematic car.
Then, as indicated by the closed-loop trajectory~\eqref{plot:T2_tailored} in Fig.~\ref{fig:simulations}(\subref{fig:sim_plane}), the proposed controller successfully steers the system into the origin. 
Moreover, the corresponding value function strictly decreases until a certain value is met, see Fig.~\ref{fig:simulations}(\subref{fig:sim_val_fct}).
Note that the tailored cost function is again scaled for technical purposes such that its quantitative values are of less importance compared to its qualitative behavior. 
Crucially, the remaining deviations in the hardly controllable directions $y$, $\theta_1$, and $\theta_2$ after $\unit[15]{s}$ are less than $\unit[0.2]{mm}$, $3\cdot 10^{-2}{\,}^{\circ}$, and $2\cdot 10^{-3}{\,}^{\circ}$, respectively.
In contrast to this, the quadratic cost's failure is shown by means of its corresponding closed-loop trajectory~\eqref{plot:T2_quadratic} in Fig.~\ref{fig:simulations}(\subref{fig:sim_plane}) and the related, stagnating value function in Fig.~\ref{fig:simulations}(\subref{fig:sim_val_fct}). 

We conjecture that the solution of the OCPs of the form~\eqref{eq:mpc_optprob_specific}, which need to be solved for the proposed controllers, becomes more difficult for larger exponents. 
The latter is connected to an increasing degree of non-holonomy. 
That is why, for the kinematic car and the considered unicycle-trailer combinations, cost functions with cancelled exponents are implemented, cf.~\cite[Rem.~4.7]{CoronGrueneWorthmann20}. 
In particular, all exponents have been divided by their greatest common divisor, yielding for example $\ell(\bm{z},\bm{u})= \sum_{i=1}^5 q_i \left| z_i \right|^{e_j} + \sum_{j=1}^2 r_j \left| u_j \right|^{f_j}$, where $\bm{e}=(12, \, 12, \, 6, \, 4,\, 3)$ and $\bm{f}=(12, \,12)$, for the two-trailer vehicle.
Although this choice yields the smallest integer exponents, choosing the cost such that all exponents are even may be worthwhile to investigate a more efficient solution by leveraging sum-of-squares-techniques. 
However, as the favorable results show, for the vehicles considered in this paper, the numerical performance was sufficient without further measures and we leave more profound investigations on numerical solution techniques for OCP~\eqref{eq:mpc_optprob_specific} to future work.

\section{Conclusions}
This paper's proposed design process for nonlinear model predictive controllers for driftless, controllable non-holonomic vehicles is comprehensive and results in a closed loop that is provably nominally asymptotically stable if the prediction horizon is sufficiently long. 
As Sec.~\ref{sec:quadratic_cost} shows, this is not given automatically and naturally since the usual and almost universally used MPC design paradigm involving a quadratic cost function provably must fail for all example vehicles considered in this paper. 
Moreover, this paper's application examples show that the proposed design procedure, while being far from trivial, is indeed practically feasible for the application to vehicles with great real-world significance. 
Hence, we believe that this paper can serve control engineers as a blueprint for the design of controllers for their kinds of non-holonomic vehicles. 
The latter may not only include ground vehicles, but also underwater and aerial vehicles. 
Further possible applications are planar space robots or a kinematic hopping robot in its flight phase, see, e.g.,~\cite{Murray2017}.

Apart from this immediate usefulness of the paper's findings, there are further features and peculiarities of the proposed approach that warrant a closer inspection in future work. 
Firstly, the system approximation procedure used in the overall MPC design procedure results in a canonical form~\cite{Bellaiche1993} that closely resembles the so-called chained form, see the Approximations~\eqref{eq:approx_car} and~\eqref{eq:approx_2trailer}, for instance. 
The chained form has over decades been of paramount importance regarding non-holonomic systems, e.g., for motion planning~\cite{Murray1991}. 
Hence, obtaining such a canonical form could facilitate the tuning of weighting factors as well as choosing a sufficiently long prediction horizon in future work.  
Secondly, results from our previous work~\cite{RosenfelderEbelEberhard21} seem to give an indication that MPC controllers with a cost function designed according to this paper's design principle may also work well if non-holonomic vehicles are not modelled kinematically but in the form of a second-order model incorporating inertia effects. 
Future work may try to formalize this in a similar way as done in this paper for the first-order case. 
Thirdly, the design procedure is not limited to the control of individual vehicles but may be applied to a system composed of multiple, potentially different, non-holonomic vehicles. 
Solving the resulting optimal control problem in a distributed fashion may lead to a distributed controller constructively tailored to the overall system's non-holonomic constraints, potentially exceeding the performance observed in our own previous undertakings in that direction~\cite{RosenfelderEbelEberhard22}. 
Hence, the latter is subject of our ongoing research work, and we intend to use such a controller in physically meaningful benchmark examples from robotics, e.g., as in~\cite{Ebel21}.

\section*{Acknowledgements}
This work was supported by the Deutsche Forschungsgemeinschaft (DFG, German Research Foundation) under Grant 433183605 and through Germany's Excellence Strategy (Project PN4-4 Theoretical Guarantees for Predictive Control in Adaptive Multi-Agent Scenarios) under Grant EXC 2075-390740016 (SimTech). 

\bibliographystyle{plain}        
\bibliography{literature_Ebel_22b_abbrev}

\appendix
\section{Insufficiency of quadratic costs for considered non-holonomic vehicles} \label{app:proof}   
In order to prove that choosing quadratic costs for the considered vehicles does not necessarily yield an asymptotically stable closed loop, we show that, in each neighborhood of the origin, there exists for each vehicle an initial state satisfying Condition~\eqref{eq:lemma_1}. The following arguments are an extension of the proof showing that this condition holds for the unicycle~\cite[Lem.~11]{MuellerWorthmann17}. Thus, technicalities are kept brief with reference to~\cite{MuellerWorthmann17}. Note that Condition~\cite[Eq.~(13)]{MuellerWorthmann17} is equivalent to Condition~\eqref{eq:lemma_1} for the unicycle~\eqref{eq:unicycle} reading
\begin{subequations}\label{eq:DIANA_condition}
    \begin{align}
        \cos \theta_0 \left( q_{11} x_0 + q_{12} y_0 + q_{13} \theta_0\right) \quad& \nonumber \\ + \sin \theta_0 \left( q_{12} x_0 + q_{22} y_0 + q_{23} \theta_0\right) &= 0, \\
        q_{13} x_0 + q_{23} y_0 + q_{33} \theta_0 &= 0, \label{eq:DIANA_condition_b}
    \end{align}     
\end{subequations}
where $q_{ij}$ denotes the corresponding entry of $\bm{Q}=\bm{Q}\tran$. Hence, we proceed with investigating the kinematic car.
\subsection{Kinematic car}
For the front-wheel driven kinematic car~\eqref{eq:kinematic_car}, 
Condition~\eqref{eq:lemma_1} degenerates for the choice $\phi_0 = 0$ to~\eqref{eq:DIANA_condition} with the marginal difference that~\eqref{eq:DIANA_condition_b} contains the entries $q_{i4}$ instead of $q_{i3},~i\in\mathbb{Z}_{1:3}$. 
However, in spite of that, a case differentiation very analogous to that of the unicycle can be employed, see~\cite{MuellerWorthmann17}, but with case conditions involving~$q_{i4}$ instead of $q_{i3}$. 
Note that the same set of equations is obtained for the rear-wheel driven kinematic car if the initial steering angle is chosen to zero~\cite{RosenfelderEbelEberhard21}. 

\subsection{Unicycle with an attached trailer}
Condition~\eqref{eq:lemma_1} can be written for this system~\eqref{eq:1trailer} as 
\begin{subequations}\label{eq:1trailer_condition}
    \begin{align}
        \cos \theta_0 \left( q_{11} x_0 + q_{12} y_0 + q_{13} \theta_0\right) \quad& \nonumber \\ + \sin \theta_0 \left( \tilde{q}_{12} x_0 + \tilde{q}_{22} y_0 + \tilde{q}_{23} \theta_0\right) &= 0, \\
        q_{13} x_0 + q_{23} y_0 + q_{33} \theta_0 &= 0, \label{eq:1trailer_condition_b}
    \end{align}     
\end{subequations} where $\tilde{q}_{12}\coloneqq q_{12}+q_{41}$, $\tilde{q}_{22}\coloneqq q_{22}+q_{42}$, and $\tilde{q}_{23} \coloneqq q_{23}+q_{43}$. Therein, we have set~$\theta_{1,0}=\unit[0]{rad}$ and, purely for the sake of readability but without loss of generality, $\ell_1=\unit[1]{m}$. 
The existence of an initial state $\bm{x}_0$ that is arbitrarily close to the origin and that satisfies~\eqref{eq:1trailer_condition} can again be concluded using the case distinctions given in~\cite{MuellerWorthmann17}, due to the analogy of~\eqref{eq:1trailer_condition} and~\eqref{eq:DIANA_condition}. 
As shown below, this can further be extended to the two-trailer system.
\subsection{Unicycle with two attached trailers}
Setting $\theta_{j,0}=\unit[0]{rad}$ and, again without loss of generality, $\ell_j=\unit[1]{m}$, $j\in\mathbb{Z}_{1:2}$, Condition~\eqref{eq:lemma_1} reads for the two-trailer system's kinematics, see the control vector fields~\eqref{eq:2trailer}, exactly the same as for the one-trailer case~\eqref{eq:1trailer_condition} such that no further proof is needed. 
Remarkably, the previous ansatz of setting $\theta_{j,0}=\unit[0]{rad}$ for all~$j$ can directly be extended to an $N$-trailer system~\cite{Sordalen1993} showing the necessity of non-quadratic stage costs for towing vehicles of this form.

\textbf{Remark}
Since it is sufficient to show that there exists one initial state in an arbitrarily small neighborhood of the origin satisfying Condition~\eqref{eq:lemma_1}, we have restricted ourselves to a natural extension of~\cite{MuellerWorthmann17}, i.e., choosing the additional initial orientations as zero. However, it seems natural that for an increasing degree of non-holonomy, there exists a growing set of initial states leading to the quadratic cost's failure. 

\end{document}